\begin{document}

\newcommand{\nc}{\newcommand}
\newcommand{\delete}[1]{}

\nc{\mlabel}[1]{\label{#1}}  
\nc{\mcite}[1]{\cite{#1}}  
\nc{\mref}[1]{\ref{#1}}  
\nc{\mbibitem}[1]{\bibitem{#1}} 

\delete{
\nc{\mlabel}[1]{\label{#1}  
{\hfill \hspace{1cm}{\bf{{\ }\hfill(#1)}}}}
\nc{\mcite}[1]{\cite{#1}{{\bf{{\ }(#1)}}}}  
\nc{\mref}[1]{\ref{#1}{{\bf{{\ }(#1)}}}}  
\nc{\mbibitem}[1]{\bibitem[\bf #1]{#1}} 
}

\newtheorem{theorem}{Theorem}[section]
\newtheorem{thm}[theorem]{Theorem}
\newtheorem{prop}[theorem]{Proposition}
\newtheorem{lemma}[theorem]{Lemma}
\newtheorem{coro}[theorem]{Corollary}
\newtheorem{cor}[theorem]{Corollary}
\newtheorem{prop-def}[theorem]{Proposition-Definition}
\newtheorem{claim}{Claim}[section]
\newtheorem{propprop}{Proposed Proposition}[section]
\newtheorem{conjecture}[theorem]{Conjecture}
\newtheorem{assumption}{Assumption}
\newtheorem{condition}[theorem]{Assumption}
\newtheorem{question}[theorem]{Question}
\theoremstyle{definition}
\newtheorem{defn}[theorem]{Definition}
\newtheorem{exam}[theorem]{Example}
\newtheorem{remark}[theorem]{Remark}
\newtheorem{ex}[theorem]{Example}
\newtheorem{conv}[theorem]{Convention}

\renewcommand{\labelenumi}{{\rm(\alph{enumi})}}
\renewcommand{\theenumi}{\alph{enumi}}
\renewcommand{\labelenumii}{{\rm(\roman{enumii})}}
\renewcommand{\theenumii}{\roman{enumii}}
\nc {\SSubP}{discrete open subdivision property\xspace}
\nc {\ISubP}{continuous subdivision property\xspace}
\nc {\ValP}{discrete closed subdivision property\xspace}

\nc {\conefamilyc}{\underline{{C}}}
\nc {\conefamilyd}{\underline{{D}}}
\nc {\conefamilye}{\underline{{E}}}

\newcommand{\frakA}{\mathfrak{A}}
\newcommand{\frakC}{\mathfrak{C}}
\newcommand{\frakU}{\mathfrak{U}}
\nc{\Ima}{\operatorname{Im}}            
\nc{\Dom}{\operatorname{Dom}}      
\nc{\Diff}{\operatorname{Diff}}           
\nc{\End}{\operatorname{End}}        
\nc{\Id}{\operatorname{Id}}                 
\nc{\Isom}{\operatorname{Isom}}      
\nc{\Ker}{\operatorname{Ker}}           
\nc{\Lin}{\operatorname{Lin}}             
\nc{\Res}{\operatorname{Res}}         
\nc{\spec}{\operatorname{sp}}           
\nc{\supp}{\operatorname{supp}}      
\nc{\Tr}{\operatorname{Tr}}                 
\nc{\Vol}{\operatorname{Vol}}            
\nc{\sign}{\operatorname{sign}}         
\nc{\id}{\operatorname{id}}
\nc{\lin}{\operatorname{lin}}
\nc{\I}{J}
\nc{\cala}{\mathcal{A}}

\nc{\ot}{\otimes}
\nc{\bfk}{\mathbf{k}}
\nc{\wvec}[2]{{\scriptsize{\big [ \!\!
    \begin{array}{c} #1 \\ #2 \end{array} \!\! \big ]}}}

\nc{\zb}[1]{\textcolor{blue}{ #1}}
\nc{\li}[1]{\textcolor{red}{ #1}}
\nc{\sy}[1]{\textcolor{purple}{  #1}}

\newcommand{\Z}{\mathbb{Z}}
\newcommand{\ZZ}{\mathbb{Z}}
\newcommand{\Q}{\mathbb{Q}}               
\newcommand{\QQ}{\mathbb{Q}}               
\newcommand{\R}{\mathbb{R}}               
\newcommand{\RR}{\mathbb{R}}               
\newcommand{\coof}{L}           
\newcommand{\coef}{\QQ}
\newcommand{\p}{\partial}         
\nc{\cone}[1]{\langle #1\rangle}
\nc{\cl}{c}                 
\nc{\op}{o}                 
\nc{\ccone}[1]{\langle #1\rangle^\cl}
\nc{\ocone}[1]{\langle #1\rangle^\op}
\nc{\cc}{\mathfrak{C}}      
\nc{\dcc}{\mathfrak{DC}}    
\nc{\oc}{\mathcal{C}^o}      
\nc{\dsmc}{\dcc }  
\nc{\csup}{{^\ast}}
\nc{\bs}{\check{S}\,}
\nc{\ci}{{C,0}}
\nc{\cii}{{C,1}}
\nc{\ciii}{{C,\geq 2}}
\nc{\civ}{\mathrm{SI}}
\nc{\cv}{{N}}

 \nc {\linf}{{\rm lin} (F)^\perp}
\nc{\dirlim}{\displaystyle{\lim_{\longrightarrow}}\,}
\nc{\coalg}{\mathbf{C}}
\nc{\barot}{{\otimes}}

\newcommand{\one}{\mbox{$1 \hspace{-1.0mm} {\bf l}$}}
\newcommand{\A}{\mathcal{A}}              
\newcommand{\Abb}{\mathbb{A}}          
\renewcommand{\a}{\alpha}                    
\renewcommand{\b}{\beta}                       

\newcommand{\B}{\mathcal{B}}              
\newcommand{\C}{\mathbb{C}}
 \newcommand{\calm}{{\mathcal M}}

\newcommand{\CC}{\mathcal{C}}           
\newcommand{\CR}{\mathcal{R}}           
\newcommand{\D}{\mathbb{D}}               
\newcommand{\del}{\partial}                    
\newcommand{\DD}{\mathcal{D}}           
\newcommand{\Dslash}{{D\mkern-11.5mu/\,}} 
\newcommand{\e}{\varepsilon}            
\newcommand{\F}{\mathcal{F}}                
\newcommand{\Ga}{\Gamma}                  
\newcommand{\ga}{\gamma}                   
\renewcommand{\H}{\mathcal{H}}           
\newcommand{\half}{{\mathchoice{\thalf}{\thalf}{\shalf}{\shalf}}}
\newcommand{\hideqed}{\renewcommand{\qed}{}} 
\newcommand{\K}{\mathcal{K}}             
\renewcommand{\L}{\mathcal{L}}          
\newcommand{\la}{\lambda}                   
\newcommand{\<}{\langle}
\renewcommand{\>}{\rangle}
\newcommand{\M}{\mathcal{M}}            
\newcommand{\Mop}{\star}                     
\newcommand{\N}{\mathbb{N}}             
\newcommand{\norm}[1]{\left\lVert#1\right\rVert}    
\newcommand{\norminf}[1]{\left\lVert#1\right\rVert_\infty} 
\newcommand{\om}{\omega}                 
\newcommand{\Om}{\Omega}                
\newcommand{\ol}{\\widetilde}                  
\newcommand{\OO}{\mathcal{O}}          
\newcommand{\ovc}[1]{\overset{\circ}{#1}}
\newcommand{\ox}{\otimes}                    
\newcommand{\pa}{\partial}
\newcommand{\piso}[1]{\lfloor#1\rfloor} 

\newcommand{\rad}{{\mathbf r}}
\newcommand{\sepword}[1]{\quad\mbox{#1}\quad} 
\newcommand{\set}[1]{\{\,#1\,\}}               
\newcommand{\shalf}{{\scriptstyle\frac{1}{2}}} 
\newcommand{\slim}{\mathop{\mathrm{s\mbox{-}lim}}} 
\renewcommand{\SS}{\mathcal{S}}        
\newcommand{\Sp}{{\rm Sp}}
\newcommand{\sg}{\sigma}                              
\newcommand{\T}{\mathbb{T}}                
\newcommand{\tG}{\widetilde{G}}           
\newcommand{\thalf}{\tfrac{1}{2}}            
\newcommand{\Th}{\Theta}
\renewcommand{\th}{\theta}
\newcommand{\tri}{\Delta}                        
\newcommand{\Trw}{\Tr_\omega}           
\newcommand{\UU}{\mathcal{U}}              
\newcommand{\Afr}{\mathfrak{A}}           
\newcommand{\vf}{\varphi}                       
\newcommand{\x}{\times}                          
\newcommand{\wh}{\widehat}                  
\newcommand{\wt}{\widetilde}                 
\newcommand{\ul}[1]{\underline{#1}}             
\renewcommand{\.}{\cdot}                          
\renewcommand{\:}{\colon}                       
\newcommand{\comment}[1]{\textsf{#1}}

\nc{\calc}{\mathcal{C}}
\nc{\calf}{\mathcal{F}(C\sim \cup _{i=1}^nC_i)}
\nc {\cals}{\mathcal {S}}
\nc{\calh}{\mathcal{H}}
\nc{\deff}{K}
\nc{\cali}{\mathcal{I}}
\nc{\calp}{\mathcal{P}}
\nc{\calq}{\mathcal{Q}}
\nc{\calt}{\mathcal{T}}
\nc{\vep}{\varepsilon}
\nc {\ltcone}{lattice cone\xspace}
\nc{\ltcones}{lattice cones\xspace}
\nc{\abf}{Algebraic Birkhoff Factorization\xspace}
\nc{\abfs}{Algebraic Birkhoff Factorizations\xspace}
\nc {\lC}{(C, \Lambda _C)}
\nc {\rdim}{{\rm dim}}

\nc{\jh}{H}
\nc{\jf}{F}

\title[Birkhoff Factorization and Euler-Maclaurin formula]{Algebraic Birkhoff Factorization and the Euler-Maclaurin formula on cones}

\author{Li Guo}
\address{Department of Mathematics and Computer Science,
         Rutgers University,
         Newark, NJ 07102, USA}
\email{liguo@rutgers.edu}

\author{Sylvie Paycha}
\address{Institute of Mathematics,
University of Potsdam,
Am Neuen Palais 10,
D-14469 Potsdam, Germany}
\email{paycha@math.uni-potsdam.de}

\author{Bin Zhang}
\address{School of Mathematics, Yangtze Center of Mathematics,
Sichuan University, Chengdu, 610064, P. R. China}
\email{zhangbin@scu.edu.cn}

\date{\today}

\begin{abstract} We equip the space of lattice cones with a  coproduct which makes it a connected cograded colagebra. The exponential sum and exponential integral on lattice cones can be viewed as linear maps on this space with values in the space of meromorphic germs with linear poles at zero. We investigate the subdivision properties-- reminiscent of the inclusion-exclusion principle for the cardinal on finite sets-- of such linear maps     and establish a compatibility of these properties with respect to the convolution quotient of the coalgebra.  Implementing the \abf procedure on the  linear maps under consideration, we factorize the exponential sum as a convolution quotient of two maps, with each of the maps in the factorization satisfying a subdivision property. Consequently, the \abf specializes to the Euler-Maclaurin formula on lattice cones and provides a simple formula for the interpolating factor by means of a projection map.
\end{abstract}

\subjclass[2010]{11H06, 52C07, 52B20, 65B15, 11M32}

\keywords{convex cones, coalgebras, \abf, Euler-Maclaurin formula, meromorphic functions, subdivision property }

\maketitle
\vspace{-1.3cm}

\tableofcontents

\setcounter{section}{0}

\allowdisplaybreaks

\section{Introduction}

The classical Euler-Maclaurin formula in analysis \cite{Ha} and its higher dimensional generalizations~\cite{BrV,CS,PK} express Riemann sums in terms of integrals over polytopes.
Their geometric relevance in relation with
the Riemann-Roch theorem on toric varieties arises from the appearance of the Todd operators \cite{GS},  related to the Todd classes of the toric varieties associated with the polytopes.
We study these formulae applying \abf from a renormalization method in quantum field theory.

The idea of this approach comes from two observations. In~\cite {BZ} localized formulae for equivariant Todd classes of toric varieties are given,  which make explicit the geometric nature of the localized formulae. A natural question is how to recover the equivariant or ordinary Todd classes from the localized formulae.  From a  mathematical viewpoint, this amounts to extracting them from fractions arising in the localized formula, and from  the viewpoint of physics, it boils down    to dealing with the singularities, an issue which calls for a  renormalization procedure. On the other hand, the exponential sum on a cone can be viewed as a regularization of the ill-defined partition function $\sum\limits_{\vec n\in C\cap \Z^k} 1$ over a cone $C$ in $\R^k$. This suggests the application of a renormalization process.

Precisely, on a convex polyhedral convex cone, the exponential sum in Eq.~(\ref{eq:rcczv})  and exponential integral in Eq.~(\ref{eq:ExpI}) can be viewed as morphisms with values in the space of multivariate meromorphic germs with linear poles at zero.
We interpolate the exponential sum and exponential integral
by means of an \abf  implemented on geometric cones, inspired by the algebraic renormalization scheme of Connes and Kreimer. Let us briefly recall their approach.
\smallskip

\noindent
{\bf Theorem} ({\bf \abf})\cite{CK}
{\it Let $H$ be a commutative connected filtered Hopf algebra.
Let $R$ be a commutative algebra with a Rota-Baxter operator $P$ of weight $-1$.
Let $\phi: H \to R$ be an algebra homomorphism.
\begin{enumerate}
\item
There are algebra homomorphisms $\phi_-: H \to \bfk+P(R)$ and
$\phi_+: H \to \bfk+(\id-P)(R)$, with $\bfk$ being the base ring,  such that
$$\phi=\phi_-^{\ast\, (-1)}\ast \phi_+.$$
Here $\phi_-^{\ast\, (-1)}$ is the inverse of $\phi_-$ with respect to the convolution product $\ast$ on the space of linear maps from $H$ to $R$ associated with the coproduct on $H$.
\mlabel{it:decom}
\item
If $P^2=P$, then the decomposition in~(\mref{it:decom}) is unique.
\mlabel{it:uni}
\end{enumerate}}

In our context, the projection $P$ does not satisfy  the Rota-Baxter property, so we first need to generalize Connes and Kreimer's approach.  By identifying the factors in the \abf,   we then show how the \abf indeed gives the Euler-Maclaurin formula. This approach has the extra benefit of providing a simple formula for the interpolation function.
\smallskip

As the context to apply the \abf, we introduce the notion
of lattice cones (Definition \ref{defn:latticecone}), which are pairs consisting of a cone and a lattice, needed to make sense of exponential generating sums relative to a choice of lattice points. On lattice cones, the exponential generating sum
  $S^c$ in Eq.~(\ref{eq:rcczv})  and exponential integral  $I$  in Eq.~(\ref{eq:ExpI}), first defined on simplicial lattice cones and then extended to general lattice cones by subdivisions, yield meromorphic germs with linear poles. Thus, the linear extensions to the linear space generated by lattice cones give linear maps with values  in the space of meromorphic germs with linear poles at zero.

  To construct the coproduct in the space of lattice cones needed for implementing the \abf, we fix an inner product (see Eq.~(\ref{eq:Q})) on the underlying  space of the lattice cones. Borrowing the definition of transverse cone from~\cite{BV1}, defined by means of   this inner product, we build the   coproduct in Eq.~(\ref{eq:coproduct}) on the space of lattice cones from a complement map which assigns to a face of a lattice cone  the transverse \ltcone (Proposition~\ref{pp:transversecone}).  This coproduct    is compatible with the partial order and the dimension filtration on cones; Theorem \ref{thm:Hopfoncones} endows  the space of lattice cones   with a connected cograded  coalgebra structure.
The corresponding convolution product  (Lemma~\ref{lem:phiinv}) on the algebra of linear maps from the space of lattice cones to a commutative algebra is   later used for the \abf.

  The \abf in the renormalization scheme of Connes and Kreimer requires the regularized linear map to take values in a Rota-Baxter algebra. The fact that the range of our linear maps being the space of meromorphic germs with linear poles at zero imposes a special treatment. This   is one of our motivations   to investigate the structure of the space of meromorphic germs with linear poles at zero~\cite {GPZ3}. It turns out that this space is a commutative algebra which splits into a subalgebra and a complement of it which is not a subalgebra. Consequently, the projection is not a Rota-Baxter operator, and the decomposition depends on a choice of an inner product. Theorem~\ref{thm:abf} which in contrast does apply to  the present situation,  generalizes  the \abf to  linear maps on a connected cograded coalgebra which is not necessarily a Hopf algebra, with values in a commutative algebra which splits into a subalgebra and its  complement.

Having the necessary ingredients at hand, we then apply the (generalized) \abf to the exponential generating sum, and obtain a factorization in terms of a ``holomorphic" part and a ``polar part" (Corollary \ref{cor:abfd}).

 Our next step is to derive the Euler-Maclaurin formula as a special case of the \abf, when the inner product used to define the transverse cone is assumed to coincide with the inner product to define the projection $\pi_+$ (in Eq.~(\ref{eq:pi+})) onto the holomorphic part of the space of meromorphic germs with linear poles at zero. For this purpose, we only need to identity the ``polar part" of the \abf with the exponential integral in the Euler-Maclaurin formula, which is clear for smooth cones. In order to apply it to general lattice cones by means of subdivisions, we carry out a detailed study of the different types of subdivision properties (Definition~\ref{defn:ValP}) enjoyed by the exponential generating sum and exponential integral, including closed discrete type for the sum  and  of continuous type for the integral. This is reminiscent of the inclusion-exclusion principle in set theory and the sieve method in number theory.

The compatibility of subdivision properties of the factors with the convolution quotient in the \abf is investigated in the general result Theorem~\ref{thm:IsubP}. It states that  the convolution quotient of two maps on the coalgebra of lattice cones with values in a commutative algebra, both of which satisfy the \ValP, satisfies the \ISubP.

Returning to our case of the exponential generating sum and exponential integral, the fact (Theorem \ref{prop:mu}) that the  ``holomorphic part" coincides with the holomorphic projection of the exponential generating sum, implies that it satisfies the \ValP (see Corollary~\ref{co:subd}.(\ref{cor:musubd})).  Theorem \ref{thm:IsubP} applied to the \abf of the exponential generating sum, leads to Corollary~\ref{co:subd}.(\ref{thm:MuSub}) which states that the ``polar part" satisfies the continuous subdivision property.
Based on the fact which results from a straightforward calculation,  that for smooth \ltcones , the ``polar part" is the exponential integral, the compatibility with subdivisions yields that the ``polar part"  coincides with the exponential integral for general \ltcones. Consequently, the \abf amounts to the  the Euler-Maclaurin  formula.

\section{Lattice cones and their coproduct}
\mlabel{sec:coalg}
In this section, we introduce the concepts of lattice cones and a transverse lattice cones to faces of lattice cones. Using transverse lattice cones, we equip the linear span of lattice cones with a coalgebra structure.

\subsection{Lattice cones} In a finite dimensional vector space over $\R$, a {\bf lattice} is a finitely generated subgroup which spans the whole space. A real vector space  equipped  with a lattice is called a {\bf lattice space} A rational multiple of a vector in the lattice is called a {\bf rational lattice vector}.

\begin{defn}
A {\bf filtered lattice space} is a pair $(V,\Lambda)$ from a family $(V_k,\Lambda_k), k\geq 1,$ of lattice vector spaces such that $V_{1}\subset V_2\subset \cdots $, $V=\cup_{k=1}^\infty V_k$, $\Lambda _k=\Lambda _{k+1}\cap V_k$ and $\Lambda=\cup_{k=1}^\infty \Lambda _k$.
 \end{defn}
\begin{remark}In applications, the filtered lattice space usually is $\RR ^\infty$ with $V_k=\RR ^k$,  $\Lambda_k$ the standard lattice $\ZZ^k$, and $\{e_1, e_2, \cdots \}$ the canonical basis.
 \end{remark}

We now collect basic definitions and facts (mostly following~\mcite{Fu} and \mcite{Zi}) on cones that will be used in this paper. See~\cite {GPZ2} for a detailed discussion on these facts. For a subset $S$ of $V$, let $\lin(S)$ denote its $\R$-linear span.

\begin{enumerate}
\item
By a {\bf cone} in $V_k$ we mean a {\bf closed convex polyhedral cone} in $V_k$, namely the convex set
\begin{equation}
\cone{v_1,\cdots,v_n}:=\RR\{v_1,\cdots,v_n\}=\RR _{\geq 0}v_1+\cdots+\RR_{\geq 0}v_n,
\mlabel{eq:cone}
\end{equation}
where $v_i\in \Lambda_k$, $i=1,\cdots, n$.
\item
The set $\{v_1,\cdots,v_n\}$ in Eq.~(\mref{eq:cone}) is called a {\bf generating set} or a {\bf spanning set} of the cone.
\item
The spanning set $\{v_1,\cdots,v_n\}$ is called {\bf primary} if
\begin{enumerate}
\item $v_i\in \Lambda_k$, $i=1, \cdots , n$,
\item there is no real number $r_i\in (0,1)$ such that $r_iv_i$ lies in $ \Lambda_k$, and
\item none of the generating vectors $v_i$ is a positive linear combination of the others.
\end{enumerate}
For a lattice cone, its primary generating set exists.
\item Define the {\bf dimension} of  a cone $C$ by $\dim C:=\dim \lin (C)$.
\item
A cone is called {\bf strongly convex} if it does not contain any nonzero linear subspace.
\item
A {\bf simplicial cone} is a cone spanned by linearly independent vectors. A simplicial cone is strongly convex.
\item
A {\bf smooth cone} is a cone whose primary generating set is a part of a lattice basis of $\Lambda _k\subseteq V_k$. For a full dimensional cone, smoothness is equivalent to the {\bf unimodularity}, namely that the determinant of the transformation matrix relating the primary generating set  to a basis of $\Lambda_k$ is $\pm 1$.
\item A {\bf face} of a cone $C$ is a subset of the form $C\cap \{u=0\}$, where $u:V_k\to \RR$
is a linear form that is non-negative on $C$. A face $F$ of a cone $C$ is again a cone and we write $F\preceq C$. If $F$ is a proper face  of a cone $C$ we write $F\precneqq C$.
\end{enumerate}

\begin{ex}A Chen cone $C_k^{\mathrm{Chen}}$, defined by $\langle e_1,e_1+e_2,\cdots, e_1+\cdots+e_k\rangle$, is a smooth cone.
\end{ex}

\begin {lemma}
\mlabel {lem:rational}
Let $W\subset U$ be subspaces of lattice space $(V_k, \Lambda _k)$ spanned by lattice vectors and let $\Lambda_U$ be a lattice of $U$ with lattice vectors. Then $W\cap \Lambda _U$ is a lattice of $W$.
\mlabel{lem:latt}
\end {lemma}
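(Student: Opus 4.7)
The plan is to verify the two defining properties of a lattice for $W \cap \Lambda_U$: that it is a finitely generated subgroup of $W$, and that its $\RR$-linear span equals $W$.

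Finite generation is essentially automatic from $\Lambda_U$. As a torsion-free, finitely generated abelian group, $\Lambda_U$ is free abelian of some finite rank; since its $\RR$-linear span is $U$, the rank must equal $d := \dim U$. Any subgroup of a finitely generated abelian group is again finitely generated (in fact, here free of rank at most $d$), so $W \cap \Lambda_U$ is finitely generated as an abelian group.

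The substantive content is the spanning condition. The strategy is to exhibit an $\RR$-basis of $W$ already sitting inside $\Lambda_U$. Because $W$ is a rational subspace of $V_k$, it admits an $\RR$-basis $w_1, \ldots, w_m$ consisting of rational vectors, where rationality is meant relative to the ambient lattice $\Lambda$. Fixing a $\ZZ$-basis $u_1, \ldots, u_d$ of $\Lambda_U$, I would expand $w_i = \sum_j a_{ij}\, u_j$ in $U$. Under the standing convention of the paper that lattices are rational, i.e.\ $\Lambda_U \subset \Lambda \otimes \QQ$, a $\QQ$-dimension count shows that $U \cap (\Lambda \otimes \QQ) = \Lambda_U \otimes \QQ$, so the coefficients $a_{ij}$ actually lie in $\QQ$. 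Clearing denominators by a common positive integer $N$ produces $N w_i \in W \cap \Lambda_U$, and the family $\{N w_i\}$ still forms an $\RR$-basis of $W$. Hence the $\RR$-span of $W \cap \Lambda_U$ contains a basis of $W$, and therefore equals $W$.

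The step I expect to demand the most care is this rationality bookkeeping: one must justify that a $\ZZ$-basis of the rational lattice $\Lambda_U$ also serves as a $\QQ$-basis for the $\QQ$-vector space $U \cap (\Lambda \otimes \QQ)$. This rests on the facts that the $u_j$ are $\RR$-linearly (hence $\QQ$-linearly) independent rational vectors, and that $U$ being a rational subspace forces $\dim_\QQ\bigl(U \cap (\Lambda \otimes \QQ)\bigr) = d$. Once this point is granted, both clauses of the definition of a lattice are satisfied for $W \cap \Lambda_U$ and the lemma follows.
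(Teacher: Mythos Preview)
Your proof is correct and follows essentially the same route as the paper's: both arguments take a rational basis $\{w_i\}$ of $W$, express each $w_i$ in a $\ZZ$-basis of $\Lambda_U$, use rationality of the coefficients to clear denominators and place integer multiples of the $w_i$ inside $W\cap\Lambda_U$, and finish with the fact that subgroups of finitely generated abelian groups are finitely generated. Your version simply spells out more carefully the step the paper states in one line---namely, why the coefficients $a_{ij}$ are rational---via the dimension count $U\cap(\Lambda\otimes\QQ)=\Lambda_U\otimes\QQ$.
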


\begin{proof} Let $\{w_1, \cdots, w_m\}$ be a basis of $W$ with lattice vectors and $\{u_1, \cdots , u_\ell\}$ a basis of $\Lambda _U$ with lattice vectors. Then for $i=1, \cdots m$, $w_i$ is a rational combination of $u_1, \cdots , u_\ell$. Therefore there exist $0\neq r_i\in \ZZ $ such that $r_iw_i \in \Lambda _U, i=1,\cdots, m$. Then we have $W=\sum\limits_{i=1}^m \R r_i w_i \subseteq \RR (W\cap \Lambda_U)$. Since $W\cap \Lambda _U$ is also finitely generated, it is a lattice of $W$.
\end{proof}

On the grounds of Lemma~\mref{lem:latt}, we set the following definition.
\begin{defn}
\begin{enumerate}
\item
A {\bf \ltcone}\footnote{The relevance of a chosen lattice in a vector space is mentioned in \cite{BV1} (see the word of caution in par. 4). The term lattice cone can also be found in the literature on Banach spaces with a somewhat different meaning~\cite{Lo}.}  in $V_k$ is a pair  $(C, \Lambda _C)$  with $C$  a cone in $V_k$ and $\Lambda _C$  a lattice in ${\rm lin}(C)$  generated by lattice vectors.
\item A {\bf face} of a \ltcone $(C,\Lambda_C)$ is the \ltcone \ $(F, \Lambda _F)$ where $F$ is a face of $C$ and $\Lambda _F:=\Lambda _C\cap {\rm lin}(F)$.
\item
A {\bf primary generating set} of a {\ltcone} \ $\lC$ is a generating set $\{v_1,\cdots,v_n\}$ of $C$ such that
\begin{enumerate}
\item
$v_i\in \Lambda_C$, $i=1, \cdots , n$,
\item
there is no real number $r_i\in (0,1)$ such that $r_iv_i$ lies in $ \Lambda_C$, and
\item none of the generating vectors $v_i$ is a positive linear combination of the others.
\end{enumerate}
\end{enumerate}
\end{defn}

\begin{remark}
\begin{enumerate}
\item Any {\ltcone} possesses a primary generating set: starting from any lattice generating set $\{v_1, \cdots, v_n\}$ of the \ltcone, a rescaling yields a set satisfying the first two conditions.
A primary generating set is obtained by eliminating an element if it is a combination of the remaining ones.
\item  For a  strongly convex {\ltcone}, a primary generating set is unique: it consists of the shortest lattice vector in each of the spanning vectors of cone~\cite {Fu}.
\item For a cone $C\subseteq V_k$, the primary generating set of the lattice cone $(C,\lin (C)\cap \Lambda_k)$ coincides with that of the cone $C$.
\end{enumerate}
\end{remark}

The following properties of \ltcones are easy to verify.
\begin{lemma}
Let $(C,\Lambda_C)$ be a lattice cone.
\begin{enumerate}
\item
Let $C'$ be a lattice cone with $\lin(C)=\lin(C')$. Then $(C',\Lambda_C)$ is also a lattice cone.
\mlabel{it:latt1}
\item
If $F$ is the face of another face $G$ of $C$, then $\Lambda_F=\Lambda_G\cap \lin(F)$.
\end{enumerate}
\mlabel{rk:latt}
\end{lemma}

\begin{ex} The lattice cone $\left( \langle   e_1\rangle,\Z e_1\right)$ is a face of the lattice cone  $\left(\langle e_1,  e_2\rangle,\Z e_1+ \Z e_2\right)$. It is also a face of the lattice cone    $ \left(\langle e_1, e_2\rangle, \Z( e_1+e_2)+ \Z e_2\right)$ since $ r e_1\in  \Z( e_1+e_2)+ \Z e_2$ if and only if $r\in \Z$. But it is not a face of the lattice cone $ \left(\langle e_1, e_2\rangle, \Z( e_1+e_2)+ \Z (e_1-e_2)\right)$ since
 $ r e_1\in  \Z( e_1+e_2)+ \Z (e_1-e_2 )$ if and only if $r\in 2\Z.$ This also shows that $(\langle e_1\rangle, \Z 2e_1)$ is a face of the lattice cone $ \left(\langle e_1, e_2\rangle, \Z( e_1+e_2)+ \Z (e_1-e_2)\right)$.
 \end{ex}

\begin{defn}\label{defn:latticecone}
A \ltcone  \ $(C, \Lambda _C)$  is called {\bf strongly convex} (resp. {\bf simplicial}) if $C$ is. A \ltcone  \ $(C, \Lambda _C)$ is called {\bf smooth} if the additive monoid $\Lambda_C\cap C$ has a monoid basis. In other words, $(C, \Lambda _C)$ is called {\bf smooth} if and only if there are linearly independent lattice vectors $v_1,\cdots,v_\ell$ such that
$\Lambda_C\cap C=\ZZ_{\geq 0}\{v_1,\cdots,v_\ell\}$.
\end{defn}

The following facts are easy to check.
\begin{remark}
\begin{enumerate}
\item For any simplicial cone $C$ spanned by linearly independent lattice vectors $v_1,\cdots,v_n$, the lattice cone $(C,\ZZ\{v_1,\cdots,v_n\})$ is smooth;
\item The smoothness of a cone comapre with that of a lattice cone, for a cone $C$ in $V_k$ is smooth if and only if the lattice cone $(C,\Lambda_k\cap {\rm lin}(C))$ is smooth.
\end{enumerate}
\end{remark}

\begin{ex} The lattice cone $(\cone{e_1,e_2}, \ZZ e_1+\ZZ e_2)$ is smooth. By the first remark, the lattice cone $(\cone{e_1,e_1+2e_2}, \ZZ e_1 + \ZZ 2e_2)$ is smooth even though $\cone{e_1,e_1+2e_2}$ is not smooth. By the second remark the lattice cone $(\cone{e_1,e_1+2e_2},\ZZ e_1+\ZZ e_2)$ is not smooth.
\end{ex}

The following elementary property is useful for later purposes.

\begin{prop}
A face of a smooth \ltcone  \ $(C, \Lambda _C)$  is smooth.
\mlabel{pp:smface}
\end{prop}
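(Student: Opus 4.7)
The plan is to exploit the equivalent characterization stated in the paper just before the proposition: a set is a monoid basis of $\Lambda_C \cap C$ if and only if it is simultaneously a cone basis of $C$ and a lattice basis of $\Lambda_C$. So smoothness of $(C,\Lambda_C)$ provides linearly independent rational vectors $v_1,\dots,v_\ell$ that serve all three roles at once, and the task reduces to producing the analogous set for a given face.

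First I would recall that, since $C=\langle v_1,\dots,v_\ell\rangle$ with the $v_i$ linearly independent, $C$ is a simplicial cone, and a standard fact about simplicial cones is that every face is spanned by a subset of the generators. So after relabeling we may assume $F=\langle v_1,\dots,v_m\rangle$ with $\lin(F)=\R v_1\oplus\cdots\oplus\R v_m$. I would then compute the induced lattice
\[
\Lambda_F=\Lambda_C\cap\lin(F).
\]
Since $v_1,\dots,v_\ell$ is a $\Z$-basis of $\Lambda_C$, any element of $\Lambda_C$ has a unique expression $\sum_{i=1}^\ell a_i v_i$ with $a_i\in\Z$; linear independence of the $v_i$ over $\R$ forces $a_{m+1}=\cdots=a_\ell=0$ whenever the element lies in $\lin(F)$. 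Hence $\Lambda_F=\Z\{v_1,\dots,v_m\}$, and in particular $v_1,\dots,v_m$ is a lattice basis of $\Lambda_F$.

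Next I would intersect with $F$: for $x=\sum_{i=1}^m a_i v_i$ with $a_i\in\Z$, the linear independence of $v_1,\dots,v_m$ identifies the coefficients in the conical expression with the same $a_i$'s, so $x\in F$ if and only if $a_i\ge 0$ for all $i$. Therefore
\[
\Lambda_F\cap F=\Z_{\ge 0}\{v_1,\dots,v_m\},
\]
which is the desired monoid basis, and this shows that $(F,\Lambda_F)$ is smooth.

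There is no real obstacle here; the only mild point is invoking that faces of a simplicial cone are spanned by subfamilies of its linearly independent generators, which is a classical fact (and in any case follows directly from the definition of face via a linear form vanishing on precisely those $v_i$ it kills). The argument does not require $C$ to be full-dimensional in $V_k$, which is important since a face of a full-dimensional smooth cone need not itself be full-dimensional.
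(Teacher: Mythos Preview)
Your proof is correct and follows essentially the same approach as the paper's. Both arguments take the monoid basis $v_1,\dots,v_\ell$ of $\Lambda_C\cap C$, identify the subset spanning the face $F$, and verify that this subset is a monoid basis of $\Lambda_F\cap F$; the only cosmetic difference is that the paper works directly with the defining linear functional $u$ (characterizing the relevant subset as $\{v_i: u(v_i)=0\}$) rather than first invoking the standard fact that faces of a simplicial cone are spanned by subfamilies of its generators.
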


\begin {proof} Let  $(F, \Lambda _F)$ be a face of a smooth \ltcone \ $\lC$. Let  $u:V_k\to \RR$  be  a linear function  defining the face $F\colon=C\cap u^\perp$. Then $\Lambda_F=\Lambda_C\cap \lin(F)$.  Let $\{v_1, \cdots , v_m\}$ be a monoid basis of $C\cap \Lambda_C$. To prove that $(F,\Lambda_F)$ is smooth, we only need to show that the set $\{v_1,\cdots,v_m\}\cap u^\perp$ is a monoid basis of $\Lambda_F \cap F$. Since the set is linearly independent, this amounts to  showing that it generates $\Lambda_F\cap F$ as a monoid.

Let $v\in \Lambda_F\cap F$. Note that $\Lambda_F\cap F=\Lambda_C\cap \lin(F)\cap C\cap u^\perp = \Lambda_C\cap C\cap u^\perp$. Thus for $v\in \Lambda_F\cap F$, from $v\in \Lambda_C$ we have
$v=\sum\limits_{i=1}^m a_i v_i$ with $a_i\in \ZZ$. From $v\in C$ we also have $a_i\in \ZZ_{\geq 0}$. For $v\in u^\perp$ we further have
$0=u(v)=\sum\limits_{i=1}^m a_i u(v_i)$. Thus if $a_i\neq 0$, then $u(v_i)=0$. It follows that
$v\in \sum\limits_{i, u(v_i)=0} \ZZ_{\geq 0} v_i = \ZZ_{\geq 0}\{\{v_1,\cdots,v_m\}\cap u^\perp\}$, which completes the proof.
\end{proof}

\subsection{Transverse {\ltcone}s}

Let $\cc_k$ denote the set of {\ltcone}s in $V_k$, $k\geq 1$.
The natural inclusions $\cc_k\to \cc_{k+1}$ induced by the natural inclusions $V_k \to V_{k+1}$, $\Lambda _k\to \Lambda _{k+1}, \ k\geq 1,$
give rise to the direct  limit $\cc =\dirlim \cc_k= \cup_{k\geq 1} \cc_k$.

We want to equip the $\QQ$-linear space $\QQ \cc$ generated by $\cc$ with a coproduct by applying the concept of a transverse cone  borrowed from~\mcite{BV1} and enriched to \ltcones.

We use an inner product on a filtered lattice space to identify quotient spaces and subspaces. This can be done by means of more general complement maps as in \cite {GP} but, in this paper, we choose to use  the inner product for that purpose.

\begin{defn} \label{conv:scalarproduct}
Let $V:=\cup_{k\geq 1}V_k$ with $\Lambda=\cup_{k\geq 1}\Lambda_k$ be a filtered lattice space. An {\bf inner product} $Q(\cdot,\cdot)=(\cdot,\cdot)$ on $V$ is a sequence of inner products
$$ Q_k(\cdot,\cdot)=(\cdot,\cdot)_k: V_k\ot V_k \to \RR, \quad k\geq 1,$$
that is compatible with the inclusion $j_k:V_k\hookrightarrow V_{k+1}$ and whose restriction to $\Lambda\ot \QQ$ and hence $\Lambda$ takes values in $\QQ$. A filtered lattice vector space together with an inner product on $V$ is called a \bf{filtered lattice Euclidean space}.
\end{defn}

From now on, our discussion is on a fixed filtered lattice Euclidean space $\left(V,\Lambda\right)$ with the Euclidean inner product  \begin{equation}\label{eq:Q}Q(\cdot, \cdot)=(\cdot, \cdot)\end{equation} and we drop $Q$ from the superscript to simplify notations whenever there is no ambiguity.
Let $L$ be a lattice subspace of $V_k$.
Set
$$L^{\perp_k}:=L^{\perp_k^Q}:=\left \{ v\in V_k\,|\, Q_k(v,u)  =0\text{ for all } u\in L\right\}.$$
The inner product $Q_k$ gives the direct sum decomposition $V_k=L\oplus L^{\perp_k}$ and hence the orthogonal projection
\begin{equation} \pi_{k,L^\perp}:=\pi_{k,L^\perp}^Q: V_k \to L^{\perp_k}
\mlabel{eq:orthproj}
\end{equation}
along $L$.
Also, the induced isomorphism $Q_k^*:V_k\to V_k^*$ yields an embedding $V_k^*\hookrightarrow V_{k+1}^*$. We refer to the direct limit  $V^\circledast:=\bigcup_{k=1}^{\infty}V_{k}^*= \varinjlim V_{k}^*$   as the {\bf filtered dual space} of $V$. In general $ V^\circledast$ differs from the usual dual space $ V^*$.

\begin{ex} Let $V=\RR ^\infty $ be equipped with the canonical inner product. For $L=\lin( e_1+e_2)\subset V_2=\RR^2$. We have $V_2/L\simeq L^{\perp _2} =\lin(e_1-e_2)$.
 \end{ex}

\begin {defn}\label{defn:transversecone}  $($\cite {BV1}$)$
Let $F$ be a face of  a cone $C\subseteq V_k$. The {\bf transverse cone} $t(C,F)$ to $F$ is the projection
$\pi_{k,F^\perp}(C)$ of $C$ in $ \lin(F)^\perp\subseteq V_k$, where $\pi_{k,F^\perp}=\pi_{k,{\rm lin}(F)^\perp}$.
\end {defn}

Note that $t(C,F)$ might not be a face of $C$. For example, the  transverse cone to the face $F=\langle e_1+e_2\rangle$ of the cone $C=\langle e_1, e_1+e_2\rangle$ is the cone $t(C,F)=\langle e_1-e_2\rangle$ under the standard inner product.

The commutative diagram
\begin{equation}
  \mlabel{eq:traninc}\xymatrix{
V_k  \ar[rr]^{\pi_{k, F^{\perp}}} \ar@{_{(}->}[d]^{j_k} &&
F^{\perp_k} \ar@{^{(}->}[rr] \ar@{_{(}->}{[d]}^{j_k{ \vert_{F^{\perp_k}}}} && V_k \ar@{_{(}->}[d]^{j_k} \\
V_{k+1}  \ar[rr]^{\pi_{k+1, F^{\perp}}}  &&
F^{\perp_{k+1}} \ar@{^{(}->}[rr]&& V_{k+1}
}
\end{equation}
shows that $\pi_{k,F^\perp}(C)$ is actually independent of the choice of $k\geq 1$ such that $C\subseteq V_k$. Thus $t(C,F)$ is well-defined in $\cc$. So we can  simplify the above notations $\pi_{k, L^\perp}$ by dropping the subscript $k$.

\begin {lemma} For a face $F$ of $C$, the transverse cone $t(C,F)$ is strongly convex.
\mlabel {lem:StrongC}
\end{lemma}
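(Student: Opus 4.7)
The plan is to show directly that the transverse cone $t(C,F)=\pi_{F^\perp}(C)$ contains no nontrivial linear subspace, which is the definition of strong convexity. Since $t(C,F)$ is a convex cone, it suffices to rule out a nonzero vector $v$ with both $v$ and $-v$ lying in $t(C,F)$.

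So suppose $v\in t(C,F)$ and $-v\in t(C,F)$. By definition, there exist $c_1,c_2\in C$ with $\pi_{F^\perp}(c_1)=v$ and $\pi_{F^\perp}(c_2)=-v$. Since $C$ is a convex cone, $c_1+c_2\in C$, and by linearity of the projection,
\[
\pi_{F^\perp}(c_1+c_2)=v+(-v)=0,
\]
so $c_1+c_2\in \ker \pi_{F^\perp}=\lin(F)$. Now I will exploit that $F$ is a face of $C$: there is a linear form $u:V_k\to \RR$ which is nonnegative on $C$ and cuts out $F=C\cap u^{-1}(0)$. Since $u$ vanishes on $F$ and is linear, it vanishes on $\lin(F)$, hence $u(c_1+c_2)=0$. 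Combined with $u(c_1),u(c_2)\ge 0$, this forces $u(c_1)=u(c_2)=0$, so $c_1,c_2\in F$. Then $c_1\in \lin(F)$, and therefore $v=\pi_{F^\perp}(c_1)=0$, contradiction.

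The only mild subtlety is making sure the kernel of $\pi_{F^\perp}$ really is $\lin(F)$ and that $u$ vanishes on all of $\lin(F)$ (not just on $F$); both follow directly from the definitions in the preceding paragraphs: $\pi_{F^\perp}$ is the projection along $\lin(F)$ onto $\lin(F)^\perp$, and the vanishing of a linear form on a spanning set extends to its linear span. Apart from these bookkeeping points, the argument is entirely formal, so I expect no real obstacle.
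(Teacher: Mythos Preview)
Your proof is correct and follows essentially the same approach as the paper's: both lift $v$ and $-v$ to preimages in $C$, invoke the defining linear form $u$ for the face $F$, use that $u$ vanishes on $\lin(F)$ and is nonnegative on $C$, and derive a contradiction. The only cosmetic difference is that the paper argues directly that $u(v)>0$ and $u(-v)>0$, whereas you pass through the sum $c_1+c_2\in\lin(F)$ to force $u(c_1)=u(c_2)=0$; the underlying idea is identical.
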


\begin {proof} Assume that the face $F$ is given by a linear functional $u$, i.e., $F=C \cap \{u=0\}$. If the transverse cone $t(C,F)$ is not strongly convex, then there is a nonzero vector $v\in t(C,F)$, such that $-v\in t(C,F)$. By the definition of transverse cone, there are vectors $v'\in {\rm lin} (F)$ and $ v^{\prime \prime }\in {\rm lin} (F)$ such that $v+v'\in C$ and $-v+v^{\prime \prime}\in C$. Since $v$ is nonzero, we have $v+v'\not \in F$, so $u(v+v')=u(v)>0$. For the same reason, we have $u(-v)=u(-v+v^{\prime \prime})>0$, which is a contradiction.
\end{proof}

 We next generalize the concept of transverse cones to the context of \ltcones.  Let $(C,\Lambda_C)$ be a \ltcone in $V_k$. Under the projection $\pi_{F^\perp}:V_k\to \lin (F)^\perp$, the lattice cone $C$ is sent to a lattice cone. Also the lattice $\Lambda_C$ in $\lin(C)$ is sent to a lattice in $\pi_{F^\perp}(\lin(C))$ since $\pi_{F^\perp}(\Lambda_C)$ is a finitely generated abelian group and spans $\lin(t(C,F))= $ $\pi_{F^\perp}({\rm lin}(C))$.
This justifies the following definition.

\begin {defn} Let $(F, \Lambda_F)$ be a face of the \ltcone $(C, \Lambda_C)$ in $V_k$. The {\bf transverse \ltcone} $(t(C,F), \Lambda _{t(C,F)})$ along the face  $(F, \Lambda _F)$ is the image of $(C, \Lambda_C)$ under the projection $\pi_{F^\perp}$:
\begin{equation}
(t(C,F), \Lambda_{t(C,F)}):=(\pi_{F^\perp}(C), \pi_{F^\perp}(\Lambda_C)).
\mlabel{eq:tcdef}
\end{equation}
We also use the notation $t\left((C,\Lambda_C),(F,\Lambda_F)\right )$ to denote the transverse \ltcone.
\end {defn}
\begin{remark}  In general, $\Lambda _{t(C,F)}\not =\Lambda_C\cap {\rm lin}(t(C,F))$, see  the example below and  the word of caution in par. 4 of \cite{BV1}.
\end{remark}
\begin{ex}  Using the standard inner product and the induced lattice, the  transverse \ltcone \ to the face $(F, \Lambda _F)=(\langle e_1+e_2\rangle, \Lambda _2\cap {\rm lin}(e_1+e_2))$ of the cone $(C,\Lambda _C)=(\langle e_1, e_1+e_2\rangle, \Lambda _2)$ is $(t(C,F),\Lambda_{t(C,F)})=\left(\langle e_1-e_2\rangle, \ZZ \left(\frac {e_1-e_2}2\right)\right)$,
so that   $\Lambda _{t(C,F)}\not =\Lambda _2\cap {\rm lin}(e_1-e_2)=\ZZ(e_1-e_2)$.
\end{ex}

For faces $F\preceq G\preceq C$ of the cone $C$, the transverse cone $t(G,F)$ can be viewed as a face of $t(C,F)$ and as the transverse cone $\pi_{F^\perp}(G)$. Thus the lattice $\Lambda_{t(G,F)}$ of $t(G,F)$ can be defined in two ways, firstly as the lattice of the face $t(G,F)$ of $t(C,F)$, namely $\Lambda_{t(G,F)}:=\Lambda_{t(C,F)}\cap \lin(t(G,F))$, and alternatively as the lattice of the transverse cone $t(G,F)$, namely $\Lambda_{t(G,F)}:=\pi_{F^\perp}(\Lambda_G)$. We need to verify that the two definitions agree. For this we first prove a lemma.

\begin{lemma}We have
$\pi_{F^\perp}(\Lambda_C\cap \lin (G))
= \pi_{F^\perp}(\Lambda_C)\cap \pi_{F^\perp}(\lin(G)).$
\mlabel{lem:int}
\end{lemma}
\begin{proof}
The left hand side is clearly contained in the right hand side. On the other hand, for $w\in \pi_{F^\perp}(\Lambda_C)\cap\pi_{F^\perp}(\lin(G))$, there are $x\in \Lambda _C$ and $y\in \lin(G)$  such that $w=\pi _{F^\perp}(x)=\pi _{F^\perp}(y)$. So $\pi _{F^\perp}(x-y)=0$, that is, $x-y \in \lin (F)\subset \lin (G)$, implying $x \in \lin (G)$. Thus $x$ is in $\Lambda_C\cap \lin(G)$ and $w$ is contained in the left hand side.
\end{proof}

The following proposition shows the equivalence of the two definitions of  $\Lambda_{t(G,F)}$.\begin{prop}\label{prop:latticetrans}For $F\preceq G\preceq C$, we have $$\Lambda_{t(G,F)}:= \Lambda_{t(C,F)}\cap \lin(t(G,F))=\pi_{F^\perp}(\Lambda_G).$$
\end{prop}
\begin{proof}
Applying Lemma~\mref{lem:int}, we obtain
\begin{eqnarray*}
\pi_{F^\perp}(\Lambda_G)
&=& \pi_{F^\perp}(\Lambda_C\cap \lin (G))\\
&=& \pi_{F^\perp}(\Lambda_C)\cap \pi_{F^\perp}(\lin(G))\\
&=& \pi_{F^\perp}(\Lambda_C)\cap \lin (\pi_{F^\perp}(G))\\
&=& \Lambda_{t(C,F)}\cap \lin(t(G,F)).
\end{eqnarray*}
Thus the two definitions of $\Lambda_{t(G,F)}$ agree.
\end{proof}

\begin{prop}\mlabel{pp:transversecone}
Transverse cones enjoy the following properties. Let $F$ be a face of a cone $C$.
\begin{enumerate}
\item {\bf (Transitivity) } $t(C,F)=t\left( t(C,F^\prime), t(F, F^\prime)\right)$ if $F^\prime$ is a face of $F$.
    \mlabel{it:tra}
\item {\bf (Compatibility with the partial order) } We have
$ \{H\preceq t(C,F)\} = \{t(G,F)\,|\, F\preceq G\preceq C\}.$
\mlabel{it:com1}
    \item {\bf (Compatibility with the dimension filtration) }  ${\rm dim}(C)={\rm dim} (F)+{\rm dim} \left( t(C,F) \right)$ for any face $F$ of $C$.
    \mlabel{it:com2}
\end{enumerate}
To the first two properties correspond similar properties  for lattice cones.
\begin{enumerate}\setcounter{enumi}{3}
\item {\bf (Transitivity) } $t\left((C,\Lambda_C),(F,\Lambda_F)\right)=t\left( t\left( (C,\Lambda_C),(F^\prime,\Lambda_{F^\prime})\right), t\left( (F,\Lambda_F),(F^\prime,\Lambda_{F^\prime})\right)\right)$ if $(F^\prime,\Lambda_{F^\prime})$ is a face of $(F,\Lambda_F)$.
    \mlabel{it:trad}
\item {\bf (Compatibility with the partial order) } We have
$$\left\{(H,\Lambda_H)\preceq t\left((C,\Lambda_C),(F,\Lambda_F)\right)\right\} =\left\{(t((G,\Lambda_G),(F,\Lambda_F))\,|\, (F,\Lambda_F)\preceq (G,\Lambda_G)\preceq (C,\Lambda_C)\right\}.$$
 \mlabel{it:com1d}
 \end{enumerate}
\end{prop}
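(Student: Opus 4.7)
\medskip

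\noindent\textbf{Proof proposal.} My plan is to handle the three statements about cones first and then bootstrap the lattice statements, with the orthogonal projection $\pi_{L^\perp}$ and the basic identity $\ker\bigl(\pi_{\lin(F)^\perp}\bigr)=\lin(F)$ doing most of the work.

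For part \eqref{it:tra}, the key observation is the orthogonal decomposition
\[
\lin(F)=\lin(F')\oplus^\perp \pi_{\lin(F')^\perp}\bigl(\lin(F)\bigr)
 =\lin(F')\oplus^\perp \lin\bigl(t(F,F')\bigr),
\]
which follows directly from the definition of the transverse cone and the fact that $F'\preceq F$ implies $\lin(F')\subseteq\lin(F)$. Taking orthogonal complements inside $V_k$, this identifies $\lin(F)^\perp$ with the orthogonal complement of $\lin(t(F,F'))$ taken inside $\lin(F')^\perp$. Consequently the orthogonal projection $\pi_{\lin(F)^\perp}$ factors as the composition of $\pi_{\lin(F')^\perp}$ followed by the orthogonal projection of $\lin(F')^\perp$ along $\lin(t(F,F'))$, which is exactly the projection used to form the iterated transverse cone on the right-hand side. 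Applying both maps to $C$ gives the identity of cones.

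For part \eqref{it:com1}, I would argue by a bijection between the two sets using the fact that $\pi:=\pi_{\lin(F)^\perp}$ has $\ker\pi\cap\lin(C)=\lin(F)$. Given a face $G$ of $C$ with $F\preceq G$, pick a linear functional $u\colon V_k\to\R$ defining $G$ as a face of $C$; since $u$ vanishes on $\lin(F)$, it descends to a linear form $\bar u$ on $\lin(F)^\perp$, and $\pi(G)=t(G,F)$ is precisely the face of $t(C,F)$ cut out by $\bar u$. Conversely, any face $H\preceq t(C,F)$ is cut out by some $\bar u$ on $\lin(F)^\perp$; extending $\bar u$ by zero on $\lin(F)$ produces a linear form on $V_k$ nonnegative on $C$ whose zero locus is a face $G$ of $C$ containing $F$, and $\pi(G)=H=t(G,F)$. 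Part \eqref{it:com2} is then a direct dimension count: $t(C,F)$ spans $\pi(\lin(C))$, and since $\lin(F)\subseteq\lin(C)$ and $\ker\pi\cap\lin(C)=\lin(F)$, we get $\dim t(C,F)=\dim\lin(C)-\dim\lin(F)=\dim C-\dim F$.

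For the lattice versions, part \eqref{it:trad} follows from \eqref{it:tra} once we check that the two ways of producing the lattice on the iterated transverse cone agree. Writing $\pi_1=\pi_{\lin(F')^\perp}$ and $\pi_2$ for the projection along $\lin(t(F,F'))$ inside $\lin(F')^\perp$, the right-hand side carries the lattice $\pi_2(\pi_1(\Lambda_C))$, while the left carries $\pi_{\lin(F)^\perp}(\Lambda_C)$; these coincide by the factorization $\pi_{\lin(F)^\perp}=\pi_2\circ\pi_1$ established in \eqref{it:tra}. For part \eqref{it:com1d}, by \eqref{it:com1} the underlying cones already match; I must show that for $F\preceq G\preceq C$ the induced lattice on $t(G,F)$ coincides with $\Lambda_{t(C,F)}\cap\lin(t(G,F))$, i.e.\
\[
\pi\bigl(\Lambda_C\cap\lin(G)\bigr)=\pi(\Lambda_C)\cap\pi\bigl(\lin(G)\bigr).
\]
The inclusion $\subseteq$ is trivial; for $\supseteq$, if $\pi(\lambda)=\pi(\ell)$ with $\lambda\in\Lambda_C$ and $\ell\in\lin(G)$, then $\lambda-\ell\in\ker\pi=\lin(F)\subseteq\lin(G)$, so $\lambda\in\Lambda_C\cap\lin(G)$. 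This last lattice identity is the only place where one has to look beyond the cone-level statements, and it is what I expect to be the subtle step since in general $\Lambda_{t(C,F)}$ need not equal $\Lambda_C\cap\lin(t(C,F))$ (as the word of caution after the definition illustrates); the argument works precisely because $\ker\pi\subseteq\lin(F)\subseteq\lin(G)$.
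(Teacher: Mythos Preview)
Your proposal is correct and follows essentially the same route as the paper: the factorization $\pi_{\lin(F)^\perp}=\pi_{\lin(t(F,F'))^\perp}\circ\pi_{\lin(F')^\perp}$ for transitivity, the bijection via descended/lifted linear functionals for the face correspondence, and the lattice identity $\pi(\Lambda_C\cap\lin(G))=\pi(\Lambda_C)\cap\pi(\lin(G))$ for part~(\ref{it:com1d}), which the paper isolates as a separate lemma (Lemma~\ref{lem:int}) with exactly the proof you give. The only spot where the paper is a bit more explicit is the injectivity of $G\mapsto t(G,F)$ in part~(\ref{it:com1}), which it verifies directly by exhibiting separating vectors rather than relying on the inverse construction.
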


\begin{proof}
We first carry out the proof for ordinary cones.
\smallskip

\noindent
(\mref{it:tra}) For $F'\preceq F\preceq C$, we have ${\rm lin}(F')< {\rm lin}(F)< V_k$. Thus the inner product induces orthogonal decompositions
$$V_k={\rm lin}(F)\oplus {\rm lin}(F)^\perp,
\quad {\rm lin}(F)={\rm lin}(F')\oplus L.
$$
Therefore
$$V_k={\rm lin}(F')\oplus L\oplus {\rm lin}(F)^\perp,
\quad {\rm lin}(F')^\perp=L\oplus {\rm lin}(F)^\perp.
$$
By definition, we have
$$L={\rm lin}(t(F,F')),
\quad
L^\perp={\rm lin}(F')\oplus {\rm lin}(F)^\perp.
$$
This implies
$\pi_{F^\perp} = \pi_{t(F,F')^\perp}\, \pi_{{F'}^\perp}.$\footnote{The composition symbol $\circ$ will be suppressed throughout the paper.}  Thus we have
\begin{equation}
t(C,F)= \pi_{F^\perp}(C)
= (\pi_{t(F,F')^\perp} \,\pi_{{F'}^\perp})(C)
= \pi_{t(F,F')^\perp} (t(C,F'))
=t(t(C,F'),t(F,F')).
\mlabel{eq:tt}
\end{equation}
\smallskip

\noindent
(\mref{it:com1}) Assume that $F$ is defined by a linear form $u_F\in V^*$. Let $G$ be a face of $C$ containing $F$ that is defined by $u_G\in V^*$. Then $u_G|_F=0$.
But any element $u\in V^*$ with $u|_F=0$ induces an element $u\in (\linf)^*$. So we can view $u_G$ as an element in  $(\linf)^*$; it therefore defines a face $t(G,F)$ of $t(C,F)$. We can therefore define a map
\begin{eqnarray}
t(\bullet, F):\{\text{faces  } G \text{ of } C \text{ containing }  F\}& \to&  \{\text{faces } H \text{ of } t(C, F)\}, \mlabel{eq:face}\\
G & \mapsto & t(G,F)=t(C,F)\cap u_G^\perp. \notag
\end{eqnarray}

To check the bijectivity of $t(\bullet, F)$, first note that any face of $t(C,F)$ is defined by some element $u\in (\linf)^*$ which can be viewed as an element in $V^*$ that vanishes on $\lin(F)$. Hence $u$ defines a face $G$ of $C$ containing $F$. Thus $t(\bullet, F)$ is surjective.

Next for two different faces $G_1$, $G_2$ containing $F$  defined by $u_1$, $u_2\in V^*$, there are  vectors $v_1$ in $G_1$ and $v_2$ in $G_2$ such that $\langle u_1,v_2\rangle >0$ and $\langle u_2,v_1\rangle >0$. Thus $t(G_1,F)$ and $t(G_2,F)$ are different since the image of $v_1$ is not in $t(G_2, F)$ and the image of $v_2$ is not in $t(G_1,F)$. Hence the map $t(\bullet, F)$ is one-to-one. This gives the desired equation.
\smallskip

\noindent
(\mref{it:com2}) follows from the definition of $t(C,F)$ since $\lin (C)=\lin (F) \oplus \lin(t(C,F))$.
\smallskip

We next verify the properties for lattice cones. For Item~(\mref{it:trad}), by the definition of transverse lattice cones, the left hand side of the desired equation is
$$ t((C,\Lambda_C),(F,\Lambda_F))= (t(C,F), \Lambda_{t(C,F)}).$$
Similarly, the right hand side of the equation is
\begin{eqnarray*}
&&t(t((C,\Lambda_C),(F,\Lambda_F)),t((F,\Lambda_F),(F',\Lambda_{F'}))) \\
&=& t((t(C,F'),\Lambda_{t(C,F')}),(t(F,F'),\Lambda_{t(F,F')}))\\
&=& (t(t(C,F'),t(F,F')),\Lambda_{t(t(C,F'),t(F,F'))}).
\end{eqnarray*}
By Item~(\mref{it:tra}), the first components of the two sides agree. On the other hand,
$$\Lambda_{t(C,F)}= \pi_{F^\perp}(\Lambda_C) = (\pi_{t(F,F')^\perp}\,\pi_{{F'}^{\perp}})(\Lambda_C)
= \pi_{t(F,F')}(\Lambda_{t(C,F')}) = \Lambda_{t(t(C,F'),t(F,F'))}.
$$
Thus the second components of the two sides also agree. This proves Item~(\mref{it:trad}).

\smallskip

For Item~(\mref{it:com1d}), the bijection in Eq.~(\mref{eq:face}) can be enriched to the bijection
\begin{eqnarray*}
t(\bullet, F):\{\text{faces  } (G,\Lambda_G) \text{ of } (C,\Lambda_C) \text{ containing }  (F,\Lambda_F)\}& \to&  \{\text{faces } (H,\Lambda_H) \text{ of } (t(C, F),\Lambda_{t(C,F)} \}, \mlabel{eq:faced}\\
(G,\Lambda_G) & \mapsto & (t(G,F),\Lambda_{t(G,F)}) \notag
\end{eqnarray*}
since we know from Proposition \ref{prop:latticetrans} that we can make sense of the lattice $\Lambda_{t(G,F)}$ of $t(G,F)$. \end{proof}

\subsection {The coalgebra of {\ltcone}s}
Let us now introduce the concept of a connected coalgebra similar to that of a connected bialgebra~\cite{Ma}.
See also~\cite[\S~2.3]{Gub}.

\begin{defn}\mlabel{def:diffconilpotent}
{\rm Let $(\coalg,\Delta)$ be a coalgebra over a field $\bfk $ with counit $\e:\coalg\to \bfk$. It is called
\begin{enumerate}
\item {\bf cograded} if there is a grading  $\coalg = \bigoplus_{n\geq 0}\coalg^{(n)}$ such that
$$ \Delta(\coalg^{(n)})\subseteq \bigoplus_{p+q=n} \coalg^{(p)}\ot \coalg^{(q)}, \quad n\geq 0.$$
Elements in $\coalg^{(n)}$ are said to have {\bf degree $n$}.
\item
{\bf coaugmented} if there is a linear map $u:\bfk \to \coalg$, called the {\bf coaugmentation}, such that $\e\, u=\mathrm{id}_{\bfk}$.
\item
{\bf connected} if $ \coalg^{(0)}=\bfk\,u(1).$
\end{enumerate}
}
\end{defn}

With the coaugmentation $u$, $\coalg$ is canonically isomorphic to $\ker \e \oplus \bfk u(1)$.
The proof of the following lemma is similar to the one for the case of connected bialgebras.

\begin{lemma}
Let $(\coalg,\Delta)$ be a cograded, coaugmented, connnected coalgebra. Then
$$ \ker \e = \bigoplus_{n\geq 1} \coalg^{(n)}.$$
Further the
{\bf reduced coproduct}
$$\bar{\Delta}: \ker \e\to \ker \e\ot \ker \e \quad x\mapsto \Delta(x)-x\ot u(1) -u(1) \ot x \quad \text{for all }  x\in \coalg,$$
is well defined and $\bar{\Delta}^{m}(\coalg^{(n)})=0$
for $m\geq n\geq 1$, where
$\bar{\Delta}^{m}, m\geq 2,$ is defined by the recursion $\bar{\Delta}^{m} = (\id \ot \bar{\Delta}^{m-1})\,\bar{\Delta}.$
\end{lemma}
The last condition is called the {\bf conilpotency} of $\Delta$~\mcite{LV}.

We now equip the linear space $\Q \cc$   of {\ltcone}s with the linear maps
\begin{equation}
\Delta: \Q\cc \longrightarrow \Q\cc \otimes \Q\cc, \quad
(C,\Lambda _C)\longmapsto \sum_{F\preceq C} (t(C,F), \Lambda _{t(C,F)}) \ot (F, \Lambda _F),
 \mlabel{eq:coproduct}
\end{equation}
\begin{equation}
\e: \Q\cc \longrightarrow \Q, \quad
(C,\Lambda _C)\longmapsto \left \{\begin{array}{ll} 1, & C=\{0\}, \\
0, & C\neq \{0\}, \end{array} \right .
\mlabel{eq:counit}
\end{equation}
and
 \begin{equation}
 u:\Q\longrightarrow \Q\cc, \quad 1 \longmapsto (\{0\},\{0\} ).
  \mlabel{eq:unit}
 \end{equation}

\begin{thm} The quadruple $(\Q\cc ,\Delta,\e,u)$ with $\Delta$, $\e$ and  $u$   as in Eqs.~$($\mref{eq:coproduct}$)$, $($\mref{eq:counit}$)$ and $($\mref{eq:unit}$)$,
defines a connected cograded coaugmented coalgebra with the grading
\begin{equation}
\QQ\cc =\bigoplus_{n\geq 0} \QQ \cc ^{(n)},
\mlabel{eq:grading}
\end{equation}
where
$$\cc ^{(n)}:= \left \{ (C,\Lambda_C)\in \cc \,\big|\, \dim\,C=n\right\},\quad n\geq 0.$$
\mlabel{thm:Hopfoncones}
\end{thm}

\begin{proof}
Let $(C, \Lambda_C)$ be a \ltcone \ in $\cc$, where $C \subset V_k$. On the one hand, we have
\begin{eqnarray*}(\id\otimes \Delta)\Delta(C,\Lambda_C) &=&  \sum_{F\preceq C} (\id\otimes \Delta) ( ( t(C,F), \Lambda _{t(C,F)})\otimes (F, \Lambda _F))\notag\\
&=& \sum_{F' \preceq F\preceq C} (t(C,F), \Lambda _{t(C,F)})\otimes (t(F,F'),\Lambda_{t(F,F')}) \otimes (F',\Lambda _{F'}).
\mlabel{eq:coar}
\end{eqnarray*}

On the other hand,
\begin{eqnarray*}
(\Delta\otimes \id)\Delta(C, \Lambda _C)&=&\sum_{F' \preceq C} (\Delta \otimes \id) ( (t(C,F'), \Lambda _{t(C,F' )})\otimes
(F', \Lambda_{F' }))\\&=& \sum_{F'\preceq C}\sum_{H\preceq t(C,F')} \left(  (t(t(C,F'),H), \Lambda _{t(t(C,F'),H)})\otimes (H,\Lambda _H)  \right)\otimes (F', \Lambda _{F'}).
\end{eqnarray*}
For $H\preceq t(C,F')$, by Proposition~\mref{pp:transversecone}.(\mref{it:com1}) and (\mref{it:com1d}), there is $F\preceq F'\preceq C$ such that
$t(F,F')=H$ and $\Lambda _{t(F,F')}=\Lambda _H$.
Further,  since $\pi_{H^\perp}\,\pi_{F'^\perp}=\pi _{t(F,F')^\perp}\pi _{F'^\perp}=\pi _{F^\perp}$, we have
$$t(t(C,F'),H)=\pi_{H^\perp}\left(t(C,F')\right) =\pi_{H^\perp}\left(\pi_{F'^\perp}\left(C\right)\right)
= \pi_{F'^\perp}\left(C\right)
=t(C,F).$$
Similarly,
$\Lambda _{t(t(C,F'),H)}=\Lambda_{t(C,F)}.$
This proves the coassociativity.
It then follows from the definitions, that $\e$ is a counit for $\Delta$ and that $u$ yields a coaugmentation. Furthermore by Proposition~\mref{pp:transversecone}.(\mref{it:com2}), the grading in Eq.~(\mref{eq:grading}) turns $(\QQ\cc,\Delta,\e)$ into a cograded coalgebra. Since $\cc^{(0)}=(\{0\}, \{0\})$, $\QQ\cc$ is connected.
\end{proof}

\section{Subdivision properties} In this section we  study the behavior of    linear maps on the space of lattice cones  with respect to subdivisions.

\subsection{Subdivisions of lattice cones}
\begin{defn}
\begin{enumerate}
\item  A {\bf subdivision} of a cone $C$ is a set $\{C_1,\cdots,C_r\}$ of cones such that
\begin{enumerate}
\item[(i)] $C=\cup_{i=1}^r C_i$,
\item[(ii)] $C_1,\cdots,C_r$ have the same dimension as $C$, and
\item[(iii)] $C_1,\cdots,C_r$ intersect along their faces, i.e.,  $C_i\cap C_j$ is a face of both $C_i$ and $C_j$.
\end{enumerate}
\item
A {\bf subdivision} of a \ltcone  \ $(C, \Lambda _C)$  is a set of {\ltcone}s $\{(C_i, \Lambda _{C_i})\,|\, 1\leq i\leq r\}$ such that $\{C_i\,|\,1\leq i\leq r\}$ is a subdivision of $C$ and $\Lambda_{C_i}=\Lambda _C$ for all $1\leq i\leq r$.
\item
A cone or lattice cone is called its own {\bf trivial subdivision.}
\end{enumerate}
\end{defn}

\begin{defn}
Let $\lC$ be a simplicial \ltcone \ in $\cc_k$ and let $\conefamilyc =\{ (C_1, \Lambda_C),\cdots, (C_r,\Lambda_C) \}$ be a  subdivision of $\lC$ into simplicial cones.
Let $\mathcal{F}^o(\conefamilyc )$ denote the set of  faces of $C_1,\cdots,C_r$ that are not contained in any proper face of $C$, that is, those faces of $C_1,\cdots,C_r$ that intersect with the interior of $C$.
\mlabel{de:subd}
\end{defn}

Just as for ordinary cones, we have the following property.
\begin{prop}
Any \ltcone \ can be subdivided into smooth {\ltcone}s.
\end{prop}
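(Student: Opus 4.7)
The plan is a standard two-step reduction: first, subdivide the lattice cone into simplicial lattice cones; second, subdivide each simplicial lattice cone into smooth ones by induction on a suitable index.

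\smallskip
\textbf{Step 1 (reduction to simplicial).} Given a lattice cone $(C,\Lambda_C)$, I would invoke the classical triangulation statement for rational polyhedral cones: the underlying cone $C$ admits a subdivision into simplicial cones $C_1,\dots,C_r$. Equipping each $C_i$ with the same lattice $\Lambda_C$ produces a subdivision $\{(C_i,\Lambda_C)\}$ of the lattice cone in the sense of the definition preceding the proposition, since the conditions involve only the underlying cones. So we may assume $(C,\Lambda_C)$ is simplicial with primary generators $v_1,\dots,v_d$.

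\smallskip
\textbf{Step 2 (reduction to smooth, by induction on an index).} For a simplicial lattice cone, define
\[
\iota(C,\Lambda_C):=\bigl[\,\Lambda_C\cap\lin(C)\,:\,\Z v_1+\cdots+\Z v_d\,\bigr],
\]
equivalently $|\det(v_1,\dots,v_d)|$ computed in any $\Z$-basis of $\Lambda_C\cap\lin(C)$. The remark following the definition of smoothness shows that $(C,\Lambda_C)$ is smooth if and only if $\iota(C,\Lambda_C)=1$. I would argue by strong induction on $\iota$. If $\iota>1$, there exists a nonzero lattice point in the half-open parallelepiped $\{\sum a_i v_i\,:\,0\le a_i<1\}$; let $v'$ be the primitive lattice vector on its ray, written $v'=\sum b_i v_i$ with $0\le b_i$ and not all zero. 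Perform the star subdivision along $v'$: for each index $i$ with $b_i>0$, form the simplicial lattice cone
\[
(C_i',\Lambda_C):=\bigl(\langle v_1,\dots,\widehat{v_i},\dots,v_d,v'\rangle,\ \Lambda_C\bigr),
\]
and take all proper faces needed to make this into an honest subdivision of $(C,\Lambda_C)$ together with consistent further subdivisions on boundary faces.

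\smallskip
\textbf{Index drop.} The key calculation is that
\[
\iota(C_i',\Lambda_C)=b_i\,\iota(C,\Lambda_C),
\]
which is a positive integer (because $v'\in\Lambda_C$ forces each $b_j$ to be of the form integer$/\iota(C,\Lambda_C)$) and is strictly less than $\iota(C,\Lambda_C)$ (because $b_i<1$). By the induction hypothesis each $(C_i',\Lambda_C)$ admits a smooth subdivision, and concatenating these gives a smooth subdivision of $(C,\Lambda_C)$.

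\smallskip
\textbf{Expected obstacle.} The combinatorial routine work is twofold: verifying that the star subdivision really satisfies the face-intersection condition of a subdivision of lattice cones (rather than just covering $C$), and verifying the index identity $\iota(C_i',\Lambda_C)=b_i\iota(C,\Lambda_C)$ together with its integrality. Both are standard in toric geometry, but must be stated in the lattice-cone formalism of Definition \ref{defn:latticecone} with the lattice $\Lambda_C$ in $\lin(C)$ (rather than an ambient lattice), since the paper explicitly cares about lattices supported on $\lin(C)$ that need not be restrictions of a larger lattice.
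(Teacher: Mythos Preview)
Your proposal follows essentially the same route as the paper: first triangulate, then repeatedly star-subdivide at a nonzero lattice point of the half-open fundamental parallelepiped, using the determinant identity
\[
\bigl|v_1,\dots,\widehat{v_j},\dots,v_d,v'\bigr| \;=\; b_j\,\bigl|v_1,\dots,v_d\bigr|
\]
to force a strict drop in the index. The paper also uses precisely this index and this computation.

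The organizational difference is that you reduce to a single simplicial lattice cone and do strong induction on $\iota$, whereas the paper keeps the entire simplicial subdivision in play and runs a double induction on $(w,\#\{D:w_D=w\})$ with $w=\max_D w_D$. When it star-subdivides at $v'$, it simultaneously subdivides \emph{every} cone of the current subdivision containing the face $\langle v_1,\dots,v_k\rangle$, so the result is again a subdivision of the original $(C,\Lambda_C)$. Your reduction ``we may assume $(C,\Lambda_C)$ is simplicial'' hides exactly this point: if you smooth each simplex of a triangulation independently, the resulting collections need not agree along shared faces, so they may fail the face-intersection condition and not assemble into a subdivision of $(C,\Lambda_C)$. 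Your phrase ``consistent further subdivisions on boundary faces'' acknowledges the issue but does not resolve it; the paper's global double induction is precisely the device that does. Apart from this bookkeeping gap, the argument and the index-drop computation are the same.
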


\begin{proof} For a given \ltcone $(D, \Lambda _C)$ in  a simplicial subdivision of a \ltcone $\lC$  with its primary generating set  $\{v_1, \cdots, v_{n}\}$,  we write $v_i=\sum\limits_{j=1}^{n} a_{ij}u_j$, $a_{ij}\in \ZZ$, $i=1, \cdots , n$, where $\{u_1, \cdots , u_{n}\}$  is  a basis of $\Lambda_C$. The absolute value of the determinant $w_D=|v_1, \cdots, v_{n}|:=|\det(a_{ij})|$ lies  in $\Z _{\ge 1}$ and is independent of the choice of a basis $\{u_1,\cdots,u_n\}$ of $\Lambda_C$. Further $w_D$ is equal to one  if and only $(D, \Lambda _C)$ is smooth.

We now prove the proposition by contradiction. Suppose $(C,\Lambda_C)$ is a  lattice cones that cannot be subdivided into smooth lattice cones.  Then for any simplicial subdivision $\conefamilyc:=\{(C_i,\Lambda_C)\}$ of $(C,\Lambda_C)$, we have
$$ w_{{\conefamilyc}}:=\max \{w_{C_i}\}>1 \quad \text{ and } \quad n_{\conefamilyc}:=\max |\{i\,|, w_{C_i}=w_{\conefamilyc}\}|\geq 1.$$

Choose a simplicial subdivision ${\conefamilyc}$ of $(C,\Lambda_C)$ with $w_{\conefamilyc}$ minimal and then among those, one with $n_{\conefamilyc}$ minimal.
We will construct a subdivision of $\lC$ that refines $\conefamilyc$. Let $D=\cone{v_1, \cdots, v_{n}}$ be a cone in ${\conefamilyc}$ with $w_D=w_{\conefamilyc}$. Since $w_D>1$, the lattice cone $(D, \Lambda_C)$ is not smooth. So $\{v_1, \cdots v_n\}$ is not a lattice basis of $\Lambda_C \cap D$. Note that the set $\{v_1,\cdots,v_{n}\}\cup \left(\left (\sum\limits_{i=1}^{n} [0,1)v_i\right) \cap \Lambda_C\right)$ spans $\Lambda_C\cap D$ as a monoid.  So there is a vector $0\neq v_D=\sum\limits_{i=1}^n c_i v_i\in \Lambda_C$ with $c_i\in [0,1)$ rational.

Reordering $v_i$, we can assume that $c_i\not=0$ for $i=1, \cdots, k,$ and $c_i=0$ for $i=k+1, \cdots , n$. We now use the vector $v_D=\sum\limits_{i=1}^{k} c_i v_i$ to subdivide the cones. Let $C_i=\cone{v_1, \cdots, v_{k}, v^{i}_{k+1}, \cdots, v^{i}_{n}}$, $i=1, \cdots , s$, be all the cones arising in the subdivision $\conefamilyc$ that contain $\cone{v_1, \cdots, v_k}$ as a face, with $C_1=D$. Then the set of cones
$$\{C_i, i>s\}\cup \{C_{ij}:=\cone{v_1,\cdots, \check{v_j}^D,\cdots,v_{k},v^{i}_{k+1}, \cdots, v^{i}_{n}}\,|\, j=1,\dots, k, \ i=1, \cdots, s\},$$
where $\check{v_j}^D$ means  $v_j$ has been replaced by $v_D$, yields a new subdivision ${\conefamilyc}'$ of $C$.

For elements in ${\conefamilyc}'$, the numbers $w_{C_i},i>s$  coincide. For $i=1,\cdots,s$ and $j=1, \cdots, k$,
$$|v_1,\cdots,\check{v_j}^D,\cdots,v_k,v^{i}_{k+1}, \cdots, v^{i}_{n} | =c_j|v_1,\cdots,v_k, v^{i}_{k+1}, \cdots, v^{i}_{n}|<|v_1,\cdots,v_k, v^{i}_{k+1}, \cdots, v^{i}_{n}|=w_{C_i}.$$
So $w_{C_{ij}}<w_{{\conefamilyc}}$. Therefore either $w_{{\conefamilyc}'}<w_{\conefamilyc}$, or $w_{{\conefamilyc}'}=w_{\conefamilyc}$ and $n_{{\conefamilyc}'}<n_{{\conefamilyc}}$. This gives the desired contradiction.
\end{proof}

We collect the following facts before introducing more concepts on subdivisions of cones.
\begin{lemma} Let $C$ be a cone and let $\conefamilyc=\{C_i\}$ be a subdivision of $\ C$.
\begin{enumerate}
\item
If $F\preceq C$ and $\rdim (F)=\rdim (C)$, then $F=C$.
\mlabel{it:bas1}
\item
If $F \preceq C$, then $F\cap C_i \preceq C_i$.
\mlabel{it:bas2}
\item
For any cone $G$ inside $C$, the set $\conefamilyc (G)=\{C_i\cap G\,|\, \dim C_i\cap G=\dim G \}$ is a subdivision of $G$.
\mlabel{it:bas3}
\mlabel {lem:IndSub}
\end{enumerate}
\mlabel{lem:bas}
\end{lemma}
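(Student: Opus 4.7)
All three items flow from the defining characterization of a face given earlier, $F=C\cap\{u=0\}$ for a linear form $u$ that is non-negative on $C$, together with the definition of subdivision from Section~2.

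For (a), write $F=C\cap\{u=0\}$ with $u\geq 0$ on $C$. If $\rdim F=\rdim C$ then $\lin(F)=\lin(C)$, so $u$ vanishes on $\lin(C)$, hence on all of $C$, forcing $F=C\cap\{u=0\}=C$. For (b), the same form $u$ is non-negative on $C_i\subseteq C$, so $F\cap C_i=C_i\cap\{u=0\}$ is by definition a face of $C_i$.

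For (c), I would verify in turn the three defining conditions of a subdivision from the numbered list preceding Definition~\mref{defn:latticecone}. Condition (ii) on dimensions is built into the index set. For condition (iii) (intersection along faces), given indices $i,j$ in the index set, $C_i\cap C_j$ is a face of $C_i$ by the hypothesis that $\{C_i\}$ subdivides $C$, so $C_i\cap C_j=C_i\cap\{u=0\}$ with $u\geq 0$ on $C_i$; applying (b) inside $C_i$ gives $(C_i\cap G)\cap(C_j\cap G)=(C_i\cap G)\cap\{u=0\}$ as a face of $C_i\cap G$, and symmetrically as a face of $C_j\cap G$.

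The only substantive step is condition (i), namely that $G$ is already exhausted by the full-dimensional-in-$G$ pieces $C_i\cap G$. I plan to argue via relative interiors. Let $U$ be the (closed) union of those $C_i\cap G$ with $\rdim(C_i\cap G)=\rdim G$, and suppose for contradiction that some $x\in\mathrm{relint}(G)$ does not lie in $U$. Then a neighborhood $W$ of $x$ in $\lin(G)$ can be chosen with $W\subseteq G\setminus U$, so $W$ is covered by the remaining pieces $C_j\cap G$, each of which lies in a proper subspace of $\lin(G)$. Since a nonempty open set in $\lin(G)$ cannot be the union of finitely many proper subspaces, this is a contradiction. Because $G$ is a closed convex cone, $G=\overline{\mathrm{relint}(G)}\subseteq U\subseteq G$. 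The main obstacle is precisely this covering step; once it is in place, parts (a), (b), and the other two subdivision conditions are formal consequences of the face definition.
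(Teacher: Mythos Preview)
Your argument is correct and essentially follows the paper's own proof: parts (a) and (b) are identical, and for (c) both you and the paper reduce the covering condition to the observation that finitely many lower-dimensional pieces cannot cover a full-dimensional set (the paper phrases this as ``$G\setminus D$ is dense in $G$'' while you argue by contradiction via relative interiors). Your treatment of condition~(iii) is in fact more explicit than the paper's, which simply asserts that the pieces intersect along faces without spelling out the face-restriction argument you give.
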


\begin {proof}
(\mref{it:bas1}) Let $F$ be defined by a linear functional $u$. Thus $u|_C\ge 0$ and $F=C\cap u^\perp$. Since $u(F)=0$ we have  $u({\rm lin}(F))=0$. But $\lin(F)=\lin(C)$ since  $\dim (F)=\dim (C)$. Thus $u(C)=0$, forcing $F=C$.
\smallskip

\noindent
(\mref{it:bas2})
If $F$ is defined by $u$, then $F\cap C_i=C\cap u^\perp\cap C_i=C_i\cap u^\perp$. So it is a face of $C_i$.
\smallskip

\noindent
(\mref{it:bas3}) Let $D$ be the union of $C_i\cap G$ with $\dim C_i\cap G<\dim G$. Then $G\backslash D$ is dense in $G$. Thus as its superset, the union of the cones $C_i\cap G$ with $\dim C_i\cap G=\dim G$ is dense in $G$ and hence is $G$.  These cones intersect along their faces and hence provide a subdivision of $G$.
\end{proof}

Given a subdivision  $\conefamilyc:=\{C_1, \cdots, C_n\}$ of $C$, set
\begin{equation}\label{eq:P}
\calp:=\calp_{\conefamilyc}:=\{\text{non-zero, proper face of some } C, C_1,\cdots,C_n\}\end{equation}
and
$$\calp_C:=\{F\in \calp\,|\, F\preceq C\}.$$

Denote $[n]:=\{1,\cdots,n\}$. For $I\subset [n]$, let
$$C_I:=\cap_{i\in I} C_i\quad \text{and} \quad \calt: =\{C_I\ |\ \emptyset \neq I\subset [n]\}.
$$
For a face $F\in \calp$ , set
\begin{equation}\label{eq:JF}J(F):=\{i\in [n] \ | \ F\preceq C_i \} \quad  \text{and}\quad j(F):=|J(F)|.
\end{equation}
Note that $H\preceq F$ implies $J(H)\supset J(F)$.

For any subset $\calq$ of $\calp$ and $i\geq 0$, we further set
\begin{equation}
 \calq_i:=\{F\in \calq\,|\, j(F)=i\}, \quad \calq_{\geq i}:=\{F\in \calq\,|\,j(F)\geq i\}.
\label{eq:degi}
\end{equation}
In particular this notation applies to $\calp_C$.

\begin {defn} Let $\{C_i\}$ be a subdivision of $C$. A proper face of a $C_i$ is called a {\bf subdivision induced face (SIF)} if it arises as a cone in a nontrivial subdivision of some face of $C$.
\end {defn}

Distinguishing between  faces  induced and not induced by a subdivision,
\begin{equation}
\calp_\civ:=\{F \in \calp \ |\ F \ {\rm is \ an \ SIF} \} \quad \text{and} \quad \calp_\cv:=\{F \in \calp \ |\ F\npreceq C,\, \, F \ {\rm is \ not \ an \ SIF} \}.
\mlabel{eq:isf}
\end{equation}
yield a partition
\begin{equation} \calp=\calp_\ci\coprod \calp_\cii\coprod \calp_\ciii \coprod \calp_\civ \coprod \calp_\cv
\mlabel{eq:disj}
\end{equation}
of $\calp$ into the five subsets of cones arising respectively as proper nonzero faces
\begin{itemize}
\item of $C$ that are not faces of any   cone in the subdivision,
\item of $C$ that are  faces of exactly one  cone in the subdivision,
\item of $C$ that are  faces of  at least two  cones in the subdivision,
\item of some $C_i$ and arising from a nontrivial subdivision of some face of $C$,
\item of some $C_i$ but not of $C$ and not arising from a nontrivial subdivision of any face of $C$.
\end{itemize}

\begin{ex} For the subdivision $\{\cone{e_1,e_1+e_2,e_3},\cone{e_2,e_1+e_2,e_3}\}$ of the cone $\cone{e_1,e_2,e_3}$, we have $$\calp_\ci=\{\langle e_1,e_2\rangle\}; \calp_\cii=\{\langle e_1\rangle,\langle e_1,e_3\rangle, \langle e_2\rangle, \langle e_2, e_3\rangle\}; \calp_\ciii =\{\langle e_3\rangle\};$$
$$ \calp_\civ =\{\langle e_1,e_1+e_2\rangle, \langle e_2,e_1+e_2\rangle \}; \calp_\cv=\{\langle e_1+e_2\rangle, \langle e_1+e_2, e_3\rangle\}.$$
\end{ex}

\begin{lemma}
\begin{enumerate}
\item
The relation
$$R:=\{(F,G)\in \calp_\civ\times \calp_\ci\,|\, F\subset G, \ {\rm dim }F={\rm dim }G \}$$
defines a surjective map
\begin{eqnarray}
\alpha: \calp_\civ &\to &\calp_\ci\\
F &\longmapsto& G,\,\ (F,G)\in R.\nonumber
\mlabel{eq:alpha}
\end{eqnarray}
\mlabel{it:alpha1}
\item
For each $G\in \calp_\ci$, the set $\alpha^{-1}(G)$ is a subdivision of $G$.
\mlabel{it:alpha2}
\end{enumerate}
\mlabel{lem:alpha}
\end{lemma}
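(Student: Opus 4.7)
The plan is to first verify that $R$ defines a map (Item~\mref{it:alpha1}), and then identify $\alpha^{-1}(G)$ with the induced subdivision $\{C_i\cap G\mid \dim(C_i\cap G)=\dim G\}$ of $G$ supplied by Lemma~\mref{lem:IndSub}.(\mref{it:bas3}) (Item~\mref{it:alpha2}), from which surjectivity is automatic. The key observation throughout is that whenever a face $G_0$ of $C$ is itself a member of its own induced subdivision, Lemma~\mref{lem:bas}.(\mref{it:bas1}) forces the whole subdivision of $G_0$ to reduce to $\{G_0\}$.

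For existence of $\alpha(F)$ in Item~\mref{it:alpha1}, given $F\in \calp_\civ$ I would unpack the ISF hypothesis to write $F = C_i\cap G_0$ for some index $i$ and some proper face $G_0\preceq C$ with $\dim F=\dim G_0$; since $F\npreceq C$ while $G_0\preceq C$ we have $F\neq G_0$. To see $G_0\in \calp_\ci$, I argue by contradiction: if $G_0$ were a face of some $C_j$, then $C_j\cap G_0=G_0$ would lie in the induced subdivision of $G_0$; but in a subdivision the members intersect in common faces, so Lemma~\mref{lem:bas}.(\mref{it:bas1}) would collapse every other member to $G_0$, contradicting the presence of the distinct member $F\neq G_0$. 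For uniqueness, if $G_1,G_2\in \calp_\ci$ both strictly contain $F$ with $\dim G_i=\dim F$, then $F\subseteq G_1\cap G_2$, and since $G_1\cap G_2$ is a face of $C$ contained in each $G_i$ of dimension at least $\dim F=\dim G_i$, Lemma~\mref{lem:bas}.(\mref{it:bas1}) yields $G_1=G_1\cap G_2=G_2$.

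For Item~\mref{it:alpha2} I would prove $\alpha^{-1}(G)=\{C_i\cap G\mid \dim(C_i\cap G)=\dim G\}$. Each such $F=C_i\cap G$ is an ISF by construction, is a proper face of $C_i$ (because $\dim F=\dim G<\dim C\leq \dim C_i$), and is not a face of $C$: otherwise $F$ would be a face of $G$ of the same dimension, forcing $F=G$ and making $G$ a face of $C_i$, contradicting $G\in \calp_\ci$. Hence $F\in \calp_\civ$ with $\alpha(F)=G$ by the uniqueness just established. Conversely, any $F\in \alpha^{-1}(G)$ arises from the subdivision of some face of $C$, and the uniqueness from Item~\mref{it:alpha1} forces that face to be $G$, so $F=C_i\cap G$ for some $i$. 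Thus $\alpha^{-1}(G)$ coincides with the nonempty subdivision of $G$ from Lemma~\mref{lem:IndSub}.(\mref{it:bas3}), yielding both Item~\mref{it:alpha2} and the surjectivity of $\alpha$.

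The main obstacle I anticipate is the careful bookkeeping among three distinct notions at play: being a face of $C$, being a face of some $C_i$, and being a full-dimensional subcone of a face of $C$ without being a face of it. Most of the argument reduces to repeated applications of the equal-dimension collapse provided by Lemma~\mref{lem:bas}.(\mref{it:bas1}) combined with tracking which of the classes $\calp_\ci,\ldots,\calp_\cv$ each resulting cone lies in.
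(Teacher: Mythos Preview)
Your proposal is correct and follows essentially the same approach as the paper's proof: both identify $\alpha(F)$ as the unique face $G$ of $C$ whose induced subdivision (Lemma~\mref{lem:IndSub}.(\mref{it:bas3})) contains $F$, verify $G\in\calp_\ci$ by arguing that otherwise the induced subdivision of $G$ would collapse to $\{G\}$, and then identify $\alpha^{-1}(G)$ with that induced subdivision. Your write-up is in fact more explicit than the paper's at two points: you spell out the ``equal-dimension collapse'' that underlies the paper's terse ``a contradiction'' when showing $G_0\in\calp_\ci$, and you check both inclusions in $\alpha^{-1}(G)=\{C_i\cap G\mid \dim(C_i\cap G)=\dim G\}$ rather than simply asserting it.
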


\begin{proof}
(\mref{it:alpha1}) Let $F\in\calp_\civ$. Then $F$ arises in a subdivision of a face $G$ of $C$, but is not equal to $G$. Such a face $G$ of $C$ is unique: if $F $ is contained in $G_1$ and $G_2$,
then
$$\dim\,G_i\geq \dim\,(G_1\cap G_2)\geq \dim\,F\geq \dim\,G_i, i=1,2.$$
Thus $\dim(G_1\cap G_2)=\dim G_1=\dim G_2$. Also $G_1\cap G_2 \preceq G_1$, $G_1\cap G_2 \preceq G_2$. So $G_1=G_1\cap G_2=G_2$ by Lemma \mref {lem:bas} (\mref {it:bas2}). Further $G$ lies  in $\calp_\ci$ for, if $G$ were contained in some $C_i$, then $G=G\cap C_i$ would be a face of $C_i$ by Lemma~\mref{lem:bas}.(\mref{it:bas2}), leading to a contradiction.
Thus we obtain a map
$$\alpha : \calp_\civ \to \calp_\ci
$$
sending $F\in \calp_\civ$ to the unique face $G$ of $C$ above. The map is surjective in view of Lemma~\mref{lem:bas}.(\mref{it:bas3}).
\smallskip

\noindent
(\mref{it:alpha2}) For $G\in \calp_\ci$, $\alpha ^{-1}(G)$ gives the subdivision of $F$ induced by $\{C_i\}$ as explicited in Lemma~\mref{lem:bas}.(\mref{it:bas3}).
\end{proof}

On the grounds of this lemma we  introduce further useful notations.
For $G\in \calp_\ci$, let
$$\alpha ^{-1}(G)=\{F_1^G, \cdots, F^G_{\ell(G)}\}, \quad \ell(G)=|\alpha ^{-1}(G)|.$$
For $k\ge 1$, let
$\alpha ^{-1}(G)_k=\{F^G\in \alpha ^{-1}(G)\,|\,j(F^G)=k\}$.

\subsection{Induced subdivisions on transverse cones}
\mlabel{sss:transubd}
We now study how a subdivision of a cone induces a subdivision on a transverse cone. We first recall the following fact.

\begin {lemma}
{\bf (Separation Lemma~\cite {Fu})}
For cones $C_1$ and $C_2$ with $C_1\cap C_2\preceq C_1$ and $C_1\cap C_2\preceq C_2$, there exists a linear function $u$ such that $u|_{C_1}\ge 0$, $u|_{C_2}\le 0$ and $C_1\cap C_2=C_1\cap u^\perp=C_2\cap u^\perp$.
\mlabel {lem:Sep}
\end{lemma}
Applying the separation lemma to transverse cones yields
\begin {lemma}\begin{enumerate}
 \item
Let $C_1$ and $C_2$ be  cones and let $F:=C_1\cap C_2$. If $F\precneqq C_1$ and $F\precneqq C_2$, then $t(C_1,F)\not =t(C_2, F)$.
\mlabel{it:two}
\item
Let $\{C_i\}$ be a subdivision of $C$ and let $H\in \calp$. Then the cones $\{t(C_i, H)\,|\, i\in J(H)\}$ are distinct.
\mlabel{it:dist}
\end{enumerate}
\mlabel{lem:dist}
\end{lemma}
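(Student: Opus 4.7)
\medskip
\noindent\textbf{Proof plan.}

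My plan is to prove part (i) directly using the Separation Lemma (Lemma \mref{lem:Sep}), and then deduce part (ii) from part (i) by combining it with the transitivity of transverse cones (Proposition~\mref{pp:transversecone}.(\mref{it:tra})). The key technical observation for part (i) is that the separating linear functional descends to $\lin(F)^\perp$ and cuts the two transverse cones into opposite half-spaces.

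For part (i), I would first apply the Separation Lemma to the pair $C_1, C_2$ with common face $F = C_1\cap C_2$, obtaining a linear functional $u$ such that $u|_{C_1}\geq 0$, $u|_{C_2}\leq 0$, and $F = C_1\cap u^\perp = C_2\cap u^\perp$. Since $F\subseteq u^\perp$, the functional $u$ vanishes on $\lin(F)$, so $u(\pi_{F^\perp}(v)) = u(v)$ for every $v\in V_k$. Consequently, every $w\in t(C_1,F)$ satisfies $u(w)\geq 0$ while every $w\in t(C_2,F)$ satisfies $u(w)\leq 0$. Because $F\precneqq C_1$, we can pick $v_1\in C_1\setminus F$, which then satisfies $u(v_1)>0$; its projection $\pi_{F^\perp}(v_1)\in t(C_1,F)$ then has strictly positive $u$-value. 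Assuming towards a contradiction that $t(C_1,F)=t(C_2,F)$, this vector would also lie in $t(C_2,F)$, forcing its $u$-value to be $\leq 0$, a contradiction.

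For part (ii), I would take distinct $i,j\in J(H)$ and set $F := C_i\cap C_j$. By the definition of a subdivision, $F$ is a common face of $C_i$ and $C_j$. Lemma~\mref{lem:bas}.(\mref{it:bas1}) combined with the fact that $C_i$ and $C_j$ share the dimension of $C$ forces $F\precneqq C_i$ and $F\precneqq C_j$ (otherwise $C_i=C_j$). Since $H$ is a face of both $C_i$ and $C_j$ that is contained in their intersection $F$, $H$ is a face of $F$ itself. Invoking the transitivity of transverse cones with $F'=H$ yields
\[
t(C_i,F) \;=\; t\bigl(t(C_i,H),\,t(F,H)\bigr),\qquad
t(C_j,F) \;=\; t\bigl(t(C_j,H),\,t(F,H)\bigr).
\]
Thus an equality $t(C_i,H)=t(C_j,H)$ would immediately produce $t(C_i,F)=t(C_j,F)$, contradicting part (i).

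The main potential obstacle is the bookkeeping in part (i): making sure that the separating functional $u$, which originally lives on $V_k$, really descends to a well-defined functional on $\lin(F)^\perp$ that behaves correctly with respect to the orthogonal projection $\pi_{F^\perp}$. Once the identity $u(\pi_{F^\perp}(v))=u(v)$ is in hand, the argument is straightforward. For part (ii), the only subtle point is ruling out the degenerate case $F=C_i$ (or $F=C_j$), which is handled by dimension counting inside the subdivision.
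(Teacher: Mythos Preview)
Your proposal is correct and follows essentially the same approach as the paper: part~(i) uses the Separation Lemma to produce a functional that descends to $\lin(F)^\perp$ and separates the two transverse cones, and part~(ii) reduces to part~(i) via the intersection $F=C_i\cap C_j$. The only cosmetic difference is that for part~(ii) the paper phrases the reduction by saying $t(C_i,F)$ is a quotient of $t(C_i,H)$ modulo $\lin(F)$, whereas you invoke the transitivity formula of Proposition~\mref{pp:transversecone}.(\mref{it:tra}) explicitly; these are the same observation.
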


\begin {proof}
(\mref{it:two}) Take the linear function $u$ in Lemma \mref {lem:Sep}. By  assumption, there are $c_1\in C_1$ and $c_2\in C_2$ such that $u(c_1)>0$ and $u(c_2)<0$. Thus $C_1$ and $C_2$ are distinct. Since $u$ vanishes on $F$, $u$ descends to a linear function $\bar{u}$ on the  space ${\rm lin} (F)^\perp$. Further, we have $\bar{u}(c_1+\lin F)=u(c_1)$ and $\bar{u}(c_2+\lin F)=u(c_2)$. Since $c_1+\lin F\in t(C_1,F)$ and $c_2+\lin F\in t(C_2, F)$, the conclusion follows.
\smallskip

\noindent
(\mref{it:dist}) Given $H\in \calp$, let $m\neq n$ be in $J(H)$. Since $\{C_i\}$ is a subdivision of $C$, the condition in Item~(\mref{it:two}) is satisfied. Thus $t(C_m,C_m\cap C_n)$ and $t(C_n,C_m\cap C_n)$ are distinct. But $t(C_m,C_m\cap C_n)$ and $t(C_n,C_m\cap C_n)$ are quotients of $t(C_m,H)$ and $t(C_n,H)$ respectively modulo $\lin(C_m\cap C_n)$. Hence $t(C_m,C_m\cap C_n)$ and $t(C_n,C_m\cap C_n)$ are also distinct.
\end{proof}

\begin{lemma}
Let $\{C_1, \cdots, C_n\}$ be a subdivision of $C$ and let $F$ be a face of some $C_i$.
\begin{enumerate}
\item The cones
$\{t(C_i,F)\,|\,i\in J(F)\}$ are distinct and form a subdivision of $t(C,F)$. Here by $t(C,F)$ we mean the projection of $C$ in ${\rm lin}(F)^\perp$ even if $F$ is not a face of $C$. In particular, if $F$ is in $\calp_\cii$, so that $F\preceq C$ and $J(F)=\{C_{i_0}\}$, then $t(C,F)=t(C_{i_0},F)$.
\mlabel{it:trpa}
\item For $I\subset J(F)$ we have
$$\bigcap _{i\in I}t(C_i, F)=t(C_I,F).
$$
\mlabel{it:trpb}
\item If $F\in \calp_\cv$, that is, if $F$ is a face of a $C_i$ but neither a face of $C$ nor an SIF, then $t(C,F)$ contains a line.
\mlabel{it:trpc}
\end{enumerate}
\label{lem:transversepartition}
\end{lemma}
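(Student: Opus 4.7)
The plan is to prove the three items in the order (2), (1), (3), since part (1) makes essential use of part (2).

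For part (2), the inclusion $t(C_I,F)\subseteq\bigcap_{i\in I}t(C_i,F)$ is immediate from $C_I\subseteq C_i$. For the reverse inclusion, given $y\in\bigcap_{i\in I}t(C_i,F)$, I would fix any lift $x\in V$ with $\pi_{F^\perp}(x)=y$ and, for each $i\in I$, pick $x_i\in C_i$ with $\pi_{F^\perp}(x_i)=y$, so that $v_i:=x_i-x$ lies in $\lin(F)$. Choosing a relative interior point $p$ of $F$, the vector $\lambda p-v_i$ lies in $F$ for $\lambda$ large enough, since $p-v_i/\lambda\in F$ for large $\lambda$ and $F$ is a cone. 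Hence $x+\lambda p=x_i+(\lambda p-v_i)$ lies in $C_i$, because $C_i$ is closed under addition of elements of $F\subseteq C_i$. Taking $\lambda$ large enough to work simultaneously for all $i$ in the finite set $I$ places $x+\lambda p$ in $C_I$ with $\pi_{F^\perp}(x+\lambda p)=y$, giving $y\in t(C_I,F)$.

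For part (1), distinctness of the cones $\{t(C_i,F)\mid i\in J(F)\}$ is Lemma~\mref{lem:dist}(\mref{it:dist}), and dimensions coincide by Proposition~\mref{pp:transversecone}(\mref{it:com2}). The intersection-along-faces property follows from part (2) combined with Proposition~\mref{pp:transversecone}(\mref{it:com1}): $t(C_i,F)\cap t(C_j,F)=t(C_i\cap C_j,F)$ is a face of each $t(C_k,F)$ since $C_i\cap C_j$ is a face of $C_i$ containing $F$. The remaining union identity $t(C,F)=\bigcup_{i\in J(F)}t(C_i,F)$ is the crux of (1). Given $\bar x=\pi_{F^\perp}(x)$ with $x\in C$ and a fixed relative interior point $p$ of $F$, the point $p+\epsilon x$ lies in $C$ for every $\epsilon>0$ and hence in some $C_{j(\epsilon)}$; by finiteness of the subdivision, some $j$ occurs for a sequence $\epsilon_n\downarrow 0$, whence $p\in C_j$ since $C_j$ is closed. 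The key sub-step is that any such $j$ belongs to $J(F)$: fixing a $C_{i_0}$ with $F\preceq C_{i_0}$, the set $F\cap C_j=F\cap(C_{i_0}\cap C_j)$ is the intersection of two faces of $C_{i_0}$, hence a face of $F$, which contains $p$ in its relative interior, and therefore equals $F$ by Lemma~\mref{lem:bas}(\mref{it:bas1}). The cone property of $t(C_j,F)$ then yields $\bar x\in t(C_j,F)$. The ``in particular'' clause is then immediate since $J(F)$ is a singleton for $F\in\calp_\cii$.

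Part (3) is the most delicate and I expect it to be the main obstacle. I plan to argue by contraposition: assuming $t(C,F)$ contains no line, it is strongly convex, so its dual cone has nonempty relative interior in $\lin(t(C,F))^*$. Choosing $\bar u$ in this relative interior and extending it arbitrarily to a linear functional on $\lin(F)^\perp$ yields $\bar u>0$ on $t(C,F)\setminus\{0\}$. The pullback $u:=\bar u\circ\pi_{F^\perp}$ then satisfies $u|_C\geq 0$ and $u^\perp\cap C=C\cap\lin(F)$, so $F':=C\cap u^\perp=C\cap\lin(F)$ is a face of $C$ containing $F$ with $\lin(F')=\lin(F)$, hence $\dim F'=\dim F$. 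The delicate identification is $F=C_{i_0}\cap F'$: the defining form $u_{i_0}$ of $F$ as a face of $C_{i_0}$ vanishes on $\lin(F)$ (since it vanishes on the spanning set $F$), so any $x\in C_{i_0}\cap\lin(F)$ has $u_{i_0}(x)=0$ and therefore lies in $F$; the reverse inclusion is trivial. Since $\dim(C_{i_0}\cap F')=\dim F'$, the cone $F$ appears in the induced subdivision of $F'$ by Lemma~\mref{lem:bas}(\mref{it:bas3}), making $F$ an ISF and contradicting $F\in\calp_\cv$. The principal difficulty I anticipate is coordinating the a priori unrelated linear forms $\bar u$ (witnessing strong convexity of $t(C,F)$) and $u_{i_0}$ (defining $F$ inside $C_{i_0}$) so as to force the equality $F=C_{i_0}\cap F'$ rather than a strict inclusion.
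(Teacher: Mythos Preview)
Your proof is correct, and for parts (2) and (3) it follows a genuinely different route from the paper.

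For part~(2), the paper inducts on $|I|$ and invokes the Separation Lemma at each step; your direct argument via a relative-interior point $p$ of $F$ and the shift $x\mapsto x+\lambda p$ is cleaner and avoids induction entirely. For the intersection-along-faces clause of part~(1) you then reuse~(2) together with Proposition~\mref{pp:transversecone}(\mref{it:com1}), whereas the paper reproves this via the Separation Lemma; your ordering~(2)$\to$(1) therefore removes some redundancy. Your union argument in~(1), pushing $p+\epsilon x$ into a fixed $C_j$ along a sequence $\epsilon_n\downarrow 0$, is close in spirit to the paper's line-segment argument, though the paper moves along $[x_0,x]$ inside $C$ instead. One small point worth making explicit: after showing $F\cap C_j=F$ you conclude $j\in J(F)$, which requires $F\preceq C_j$, not merely $F\subseteq C_j$; this follows since $F\subseteq C_{i_0}\cap C_j\preceq C_j$ and $F\preceq C_{i_0}$ forces $F\preceq C_{i_0}\cap C_j$, hence $F\preceq C_j$ by transitivity.

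For part~(3) the contrast is sharpest. The paper argues by induction on $\dim C$, splitting into the cases where $F$ meets the relative interior of $C$ versus lies in a proper face. Your contrapositive argument via the dual cone of $t(C,F)$ is more direct: strong convexity yields a strictly positive functional $\bar u$, and the pullback $u=\bar u\circ\pi_{F^\perp}$ carves out the face $F'=C\cap\lin(F)$ of $C$ with $\dim F'=\dim F$ and $C_{i_0}\cap F'=F$, placing $F$ either among the faces of $C$ or among the ISFs. The ``delicate identification'' you flag is in fact routine once you observe $u_{i_0}|_{\lin(F)}=0$, exactly as you wrote. This approach buys a shorter, induction-free proof; the paper's inductive proof has the minor advantage of being entirely intrinsic (no dual cones), but yours is arguably more transparent.
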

\begin{proof}
(\mref{it:trpa}) Clearly,  $t(C,F)=\bigcup_{i=1}^{n}t(C_{i},F)$.  We first need to  prove that $t(C,F)=\bigcup_{i\in J(F)}t(C_i,F).$

For any $x$ in  $C_{i}$ such that $i \notin J(F)$, let $x_{0}\neq x$ be any point in the relative interior of $F\cap C_{i}$.
The line segment $[x_{0},x]$ lies in $C_i$ and hence in $C$. It  intersects $C_j$ for some $j\in J(F)$ at a point $y\not = x_0$ for otherwise, $x_0=y\in C_j$  would lie on a face of $C_j$, $j\not\in J(F)$, so the relative interior of $F\cap C_{i}$ would lie on a face of $C_j$ contradicting the assumption on $J(F)$.
As an element of $V_k$, we have
$$x=\frac{||x-x_{0}||}{||y-x_{0}||}(y-x_{0})+x_{0}.$$
Therefore
$$\pi_{F^{\perp}}\,(x) =\pi_{F^{\perp}}\,\left(\frac{||x-x_{0}||}{||y-x_{0}||}y\right),$$
which is an element of $\bigcup_{i\in J(F)}t(C_i,F)$, as required.

We next prove that the cones $t(C_{i},F)$, $i\in J(F)$, only intersect along their faces.
If distinct cones $C_i$ and $C_j, i, j\in [n]$ have a common face $F$, then $F\subset C_i\cap C_j$. By Lemma~\mref {lem:Sep}, there exists a linear function $u$, such that $u|_{C_i}\ge 0$, $u|_{C_j}\le 0$ and $C_i\cap C_j=C_i\cap u^\perp=C_j\cap u^\perp$. Then for $x_i\in C_i$, $x_j\in C_j$, if $\pi_{k,F^\perp}(x_i)=\pi_{k,F^\perp}(x_j)$, then $u(x_i)=u(x_j)$, so $u(x_i)=u(x_j)=0$. Therefore $x_i\in C_i\cap C_j$ and $x_j\in C_i\cap C_j$. So $t(C_i,F)\cap t(C_j,F)=t(C_i\cap C_j,F)$. This gives what we need since by Proposition~\mref{pp:transversecone}.(\mref{it:com1}), the right hand side is a face of the two cones on the left hand side. Now assertion (\mref{it:trpa}) follows from  Lemma~\mref{lem:dist} (\mref{it:dist}).

\smallskip

\noindent
(\mref{it:trpb}) We proceed by induction on $|I|$. The case  $|I|=1$ is trivial. Reordering the cones if necessary, we  assume that the desired equation holds for $I=[k]$ with $k\geq 1$, and aim to prove it  when $I=[k+1]$. If $C_{[k]}\subset C_{k+1}$, then $C_{[k]}=C_{[k+1]}$ and  $t(C_{[k]},F)\subseteq t(C_{[k+1]},F)$. Thus
$$ t(C_{[k+1]},F)=t(C_{[k]},F)=t(C_{[k]},F)\cap t(C_{k+1},F)=\cap_{i\in [k+1]} t(C_i,F).$$
If $C_{[k]}\not \subset C_{k+1}$, then we can apply the same argument as in the previous item with $C_m, C_n$ replaced by $C_{[k]}, C_{k+1}$ since the argument only requires the two cones to be  different and to intersect along their faces. It follows that $t(C_{[k]},F)\cap t(C_{k+1},F)=t(C_{[k]}\cap C_{k+1}, F)$, as needed to complete the induction.

\smallskip
\noindent
(\mref{it:trpc})
We prove the property by induction on $k={\rm dim}(C)$.

For $k=1$, there is nothing to prove. Let us assume that  the claim  holds for $k=n$ and let us prove it for $k=n+1$.

Let $F$ be a proper face of a $C_i$ but is neither a face of $C$ nor an SIF. Since $F$ is not a face of $C$, it is either not contained in any proper face of $C$, or it is properly contained in a proper face of $C$.

Assume $F$ is not contained in any proper face of $C$. Then there exists a point $x_{0}$ of $F$ that is in the relative interior of $C$.
Since ${\rm dim}(F)<{\rm dim }(C)$, there is a point $x_1\in C$, $x_1\not =x_0$,  such that $\pm x_1+x_0\in  {\rm lin}^{\perp}(F)$. Therefore $t(C,F)$ contains the line $\RR x_1$.

Now assume that $F$ is properly contained in a proper face of $C$. Let $G$ be a face of $C$ that contains $F$ and has minimal dimension with this property. This $G$ is unique; indeed  if both $G_1$ and $G_2$ are faces of $C$ containing $F$ and having minimal dimension, then so is $G_1\cap G_2$, which  by Lemma~\mref{lem:bas}.(\mref{it:bas2}) implies that $G_1=G_2=G_1\cap G_2$, leading to a contradiction. Now $F$ is neither a face of $G$ nor an SIF, so by the induction hypothesis, $t(G,F)$ contains a line. Then $t(C,F)\supseteq t(G,F)$ contains a line.
\end{proof}

\subsection{Compatibility of the convolution product with subdivisions}
Let ${\mathcal C}$ a class of sets stable under finite intersections and finite unions. A map $\varphi$ on ${\mathcal C}$   with values in a commutative algebra $A$ is said to satisfy the valuation property if
$$ \varphi(A\cup B)+ \varphi(A\cap B) = \varphi(A)+\varphi(B) \quad \text{for all } \, A,B\in\mathcal{C}.$$  A straightforward induction shows that a map obeys the valuation property if and only if it satisfies the following compatibility with unions:
\begin{equation}\label{eq:compatibilityunions}\phi(\cup_{i=1}^n A_i)=\sum_{\emptyset\neq I\subset [n]}(-1)^{\vert I\vert-1}  \phi(A_I)\quad \text{for all } A_1,\cdots, A_n\in {\mathcal C},\end{equation}
where we have set  $A_I:= \cap _{i\in I} A_i$. For the cardinal on finite sets, Eq.~(\ref{eq:compatibilityunions}) amounts to the inclusion-exclusion principle.

We extend the valuation property of the form in Eq.~(\ref{eq:compatibilityunions}) to subdivision properties for maps on lattice cones. Notice that that the set of lattice cones is only  equipped with a partial intersection and a  partial union.

\begin {defn} \label{defn:ValP} A linear map $\phi$  on $ \QQ \frakC $  with values in a commutative algebra has
\begin {itemize}
\item the {\bf \ValP} if for a \ltcone $\lC$ and its subdivision $\conefamilyc=\{(C_i, \Lambda_{C_i} )\}_{i=1,\cdots ,n}$,
\begin{equation}
\phi \lC=\sum _{\emptyset \neq I\subseteq [n]}(-1) ^{|I|-1}\phi (C_I, \Lambda_{C_I}).
\mlabel{eq:VaSubP}
\end{equation}
\item
the {\bf \SSubP} if for a \ltcone $\lC$ and its subdivision $\conefamilyc=\{(C_i, \Lambda_{C_i} )\}_{i=1,\cdots ,n}$,
\begin{equation}
\phi \lC=\sum _{F\in \mathcal {F}^o (\conefamilyc) }\phi (F, \Lambda_F).
\label{eq:SSubP}
\end{equation}
\item
the {\bf \ISubP} if for a \ltcone $\lC$ and its subdivision $\conefamilyc=\{(C_i, \Lambda_{C_i} )\}_{i=1,\cdots ,n}$,
\begin{equation}
\phi \lC=\sum _{i=1 }^n\phi (C_i, \Lambda_{C_i}).
\label{eq:ISubP}
\end{equation}
\end{itemize}
\end{defn}

The \ValP is closely related to \SSubP. For a linear map $\phi: \QQ \frakC \to {\mathcal A}$, we define
the map $\phi^c: \QQ \frakC \to {\mathcal A}$
by
$$\phi ^c \lC :=\sum _{F\preccurlyeq C}\phi (F,\Lambda_{F}).
$$

Then we have

\begin {prop}
\mlabel {prop:ValPSSubP}
A linear map $\phi $ has the \SSubP if and only if $\phi ^c$ has the \ValP.
\end{prop}

\begin{proof} For a \ltcone $\lC$ and its subdivision $\conefamilyc=\{(C_i, \Lambda_{C_i} )\}_{i=1,\cdots ,n}$,
\begin{eqnarray*}
\phi^c(C)-\sum_{\emptyset\neq I\subseteq [n]} (-1)^{|I|-1}\phi^c(C_I)
&=& \sum_{F\preccurlyeq C}\phi(C) - \sum_{\emptyset\neq I\subseteq [n]}(-1)^{|I|-1}\sum_{F\preccurlyeq C_I}\phi(F) \\
&=& \sum_{F\preccurlyeq C} \phi(F) - \sum_{F\in \mathcal{F}^c(\conefamilyc)} \left(\sum_{\emptyset\neq I\subseteq J(F)}(-1)^{|I|-1}\right) \phi(F) \\
&=& \sum_{F\preccurlyeq C}\phi(F) - \sum_{F\in \mathcal{F}^c(\conefamilyc)} \phi(F) \\
&=& \phi(C)-\sum_{F\in \mathcal{F}^o(\conefamilyc)}\phi(F) +\sum_{G\precnsim C}\left(\phi(G)-\sum_{F\in\conefamilyc _G }\phi(F)\right) \\
&=&\phi(C)-\sum_{F\in \mathcal{F}^o(\conefamilyc)}\phi(F) +\sum_{G\precnsim C}\left(\phi(G)-\sum_{F\in\mathcal{F}^o(\conefamilyc(G))}\phi(F)\right).
\end{eqnarray*}
Here the third equation follows from $\sum\limits_{Y\subseteq X}(-1)^{|Y|}=0$ for a finite set $X$; the fourth equation follows from
$$\conefamilyc _G:=\{F \in \mathcal {F} ^c(\conefamilyc) \ | \ F \preccurlyeq G, \ F\ {\rm is\ not\ contained\ in\ any\ proper\ face\ of\ }G\},
$$
and the fact that $\mathcal {F}^c(\conefamilyc)$ is a disjoint union of $\conefamilyc _G, G\preccurlyeq C$; the fifth equation is a consequence of Lemma~\ref{lem:bas} (\ref{it:bas3}).

Now if $\phi$ has the \SSubP, then the right hand side is zero, so  the left hand is zero and $\phi^c$ has the \ValP.

Conversely, if $\phi^c$ has the \ValP. Then the left hand side, and hence the right hand side, is zero for all $C$. Note that for a one dimensional cone, the second sum on the right hand side is zero, showing that $\phi$ has the \SSubP for one dimensional cone. Then by an induction on the  dimension, $\phi$ is \SSubP for all cones.
\end{proof}

We now state our main theorem on \ValP of convolution quotient of linear maps on lattice cones.

\begin {theorem}
\mlabel {thm:IsubP}\mlabel{thm:0-sub}
Let $\phi$ and $\psi$ be linear maps on $\QQ\cc$ with values in a commutative algebra A that satisfy the following properties:
\begin{enumerate}
\item $\phi$ and $\psi$ satisfy the \ValP  property;
\item $\phi(\{0\},\{0\})=\psi (\{0\},\{0\})=1$;
\item for a \ltcone $\lC$ that is not strongly convex, $\phi (\lC)=\psi (\lC)=0$.
\end{enumerate}
Then the convolution quotient $\chi:=\phi ^{*(-1)}*\psi$ has the \ISubP.
\end{theorem}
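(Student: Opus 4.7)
I would proceed by induction on the dimension $k=\dim C$, with $k=0$ trivial since the only subdivision of $\{0\}$ is itself. In the inductive step, I would expand both $\chi(C)$ and $\sum_{i=1}^n\chi(C_i)$ via the defining formula $\chi=\phi^{\ast(-1)}\ast\psi$, i.e.,
$$\chi\lC=\sum_{F\preceq C}\phi^{\ast(-1)}(t(C,F),\Lambda_{t(C,F)})\,\psi(F,\Lambda_C\cap\lin F),$$
and regroup $\sum_i\chi(C_i)$ by collecting, for each face $F$ appearing across the $C_i$'s, the joint contribution $\bigl(\sum_{i\in J(F)}\phi^{\ast(-1)}(t(C_i,F),\Lambda_{t(C_i,F)})\bigr)\psi(F,\Lambda_C\cap\lin F)$. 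The goal is to match the two expansions stratum by stratum after partitioning the $F$'s that arise on the $C_i$ side into $\{0\}$, $C_i$ and the five classes $\calp_\ci,\calp_\cii,\calp_\ciii,\calp_\civ,\calp_\cv$ of~(\mref{eq:disj}), while on the $C$ side only $\{0\}$, $C$ and $\calp_\ci\cup\calp_\cii\cup\calp_\ciii$ can occur.

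A key preparatory step is the vanishing of $\phi^{\ast(-1)}$ on lattice cones that are not strongly convex. This follows by induction on dimension from the recursion $\phi^{\ast(-1)}\lC=-\phi\lC-\sum_{\{0\}\precneqq F\precneqq C}\phi(t(C,F))\,\phi^{\ast(-1)}(F,\Lambda_F)$ together with hypothesis~(3): for a non-strongly-convex $C$, the cone $\{0\}$ is not a face of $C$ (so no boundary term $\phi(C)\cdot 1$ appears), and every proper face of $C$ inherits the lineality space of $C$, hence is itself not strongly convex, so $\phi^{\ast(-1)}$ vanishes on it by induction. Lemma~\mref{lem:transversepartition}.(\mref{it:trpc}) then makes the stratum $\calp_\cv$ invisible in the problematic sums, since for $F\in\calp_\cv$ the cone $t(C,F)$ contains a line; this together with the $\lambda$-subdivision expansion of $\phi$ applied to the subdivision $\{t(C_i,F)\mid i\in J(F)\}$ of $t(C,F)$ provided by Lemma~\mref{lem:transversepartition}.(\mref{it:trpa}), and the dual identity via Lemma~\mref{lem:transversepartition}.(\mref{it:trpb}) for the intersections $t(C_I,F)$, forces corresponding cancellations among the $\phi^{\ast(-1)}(t(C_i,F))$ terms indexed by $\calp_\cv$.

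For the remaining strata I would invoke the $\lambda$-subdivision property of $\psi$ at each face $G\in\calp_\ci$ of $C$ which, via the surjection $\alpha:\calp_\civ\to\calp_\ci$ of Lemma~\mref{lem:alpha} whose fibres $\alpha^{-1}(G)$ subdivide $G$, introduces the ISFs in $\calp_\civ$. The induction hypothesis applied to each $G\in\calp_\ci$ supplies $\chi(G,\Lambda_G)=\sum_{F\in\alpha^{-1}(G)}\chi(F,\Lambda_G)$, which is precisely the identity needed to absorb the $\calp_\civ$ contributions from the $C_i$ side into the expansion on the $C$ side. The principal obstacle will be the combinatorial bookkeeping required to check that the $\lambda$-weighted intersection terms---arising from $C_I$ with $|I|\geq 2$ on the face side of $C$, and from $t(C_I,F)$ on the transverse-cone side---cancel to leave the $\lambda$-independent identity $\chi\lC=\sum_i\chi(C_i,\Lambda_C)$. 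Isolating a clean intermediate $\lambda$-subdivision-like identity for $\phi^{\ast(-1)}$ restricted to strongly convex transverse cones of a strongly convex cone seems to be the technical heart of the argument, and the fact that $\phi$ and $\psi$ carry the \emph{same} weight $\lambda$ is what ultimately drives the weight in the final identity to collapse to $0$.
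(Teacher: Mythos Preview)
Your outline takes a different route from the paper's. The paper does \emph{not} expand $\chi$ directly as $\phi^{\ast(-1)}\ast\psi$; instead it first rewrites $\chi$ via the recursion $\chi=\psi-\phi-\phi\ast'\chi$ (Lemma~\mref{lem:IndChi}), whose decisive feature is that only $\phi$, never $\phi^{\ast(-1)}$, appears in the transverse-cone slot. This recursion is then unwound through an auxiliary index $\ell$ (Proposition~\mref{pp:chiform}) to produce, for $\ell$ large, a closed expression for $\chi\bigl((C,\Lambda)-\sum_i(C_i,\Lambda)\bigr)$ with just two surviving terms: one equal to $(\psi-\phi)$ applied to $(C,\Lambda)-\sum_i(C_i,\Lambda)-\sum_{F\in\calt_{\geq 2}}\lambda_{J(F)}(F,\Lambda_F)$, which vanishes by the $\lambda$-subdivision hypothesis for $\phi$ and $\psi$, and one of the form $\sum_{G\in\calp_\ci}\phi(t(C,G),\Lambda_{t(C,G)})\bigl(\chi(G,\Lambda_G)-\sum_{H\in\alpha^{-1}(G)}\chi(H,\Lambda_H)\bigr)$, which vanishes by the induction hypothesis on the lower-dimensional faces $G$.

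Your direct expansion forces you to control the $F=\{0\}$ terms $\phi^{\ast(-1)}(C)-\sum_i\phi^{\ast(-1)}(C_i)$ and, more generally, sums of the form $\sum_{i\in J(F)}\phi^{\ast(-1)}(t(C_i,F))$. These are exactly what you label the ``technical heart,'' and you do not establish the needed identity. There is no a priori $\lambda'$-subdivision property for $\phi^{\ast(-1)}$ for any weight $\lambda'$: the hypotheses concern $\phi$ and $\psi$, and the convolution inverse mixes all degrees. The paper's device of passing to the recursion with $\phi$ is precisely what sidesteps this; any attempt to complete your approach would in effect have to reprove Proposition~\mref{pp:chiform} in disguise. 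Your auxiliary observations are correct and relevant---the vanishing of $\phi^{\ast(-1)}$ on non-strongly-convex cones (via the absence of $\{0\}$ as a face and hereditary lineality), the handling of $\calp_\cii$ through Lemma~\mref{lem:transversepartition}.(\mref{it:trpa}), and the elimination of $\calp_\cv$ through Lemma~\mref{lem:transversepartition}.(\mref{it:trpc})---but as it stands the proposal has a genuine gap at the point you yourself flag.
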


We introduce more notations and preliminary results before actually proving the theorem. In the coalgebra $\QQ \frakC$, we set
\begin{equation}
\Delta' \lC:=\Delta \lC-(\{0\},\{0\})\otimes \lC-\lC \otimes (\{0\},\{0\}).
\label{eq:scoprod}
\end{equation}
We also use $*'$ to denote the restricted product of the convolution product in the space ${\mathcal L}(\QQ \frakC,A)$ of linear maps built from $\Delta '$, that is
$$\phi _1*'\phi _2=m_A\,(\phi _1\otimes \phi _2)\,\Delta ',$$
where $m_A$ is the multiplication of $A$. Then  $\phi_1*\phi_2=\phi_1*'\phi_2+\phi_1+\phi_2$.

\begin {lemma}
The map $\chi$ satisfies the recursive formula
\begin{equation}
\chi =\psi -\phi -\phi *'\chi.
\label{eq:indchi}
\end{equation}
\label {lem:IndChi}
\end{lemma}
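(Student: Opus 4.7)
The plan is to derive the recursion directly from the definition $\chi = \phi^{*(-1)} * \psi$, which, because $\phi(\{0\},\{0\})=1$ makes $\phi$ invertible in the convolution group ${\mathcal G}(\QQ\cc,A)$ by Lemma~\ref{lem:phiinv}, is equivalent to
\[
\phi \ast \chi = \psi.
\]
I will evaluate both sides at an arbitrary \ltcone $\lC$ and split the coproduct $\Delta$ into its ``boundary'' and ``reduced'' pieces using the identity $\Delta \lC = \lC \otimes J + J \otimes \lC + \Delta' \lC$, where $J := (\{0\},\{0\})$ is the image of the coaugmentation.

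First I would record the normalizations $\phi(J) = \psi(J) = 1$ (hypothesis (b)) and deduce $\chi(J) = \phi^{*(-1)}(J)\,\psi(J) = 1 \cdot 1 = 1$, since convolution inverses preserve the unit. Then, applying $m_A \circ (\phi \otimes \chi)$ to $\Delta \lC$ and using the above decomposition gives
\[
(\phi \ast \chi)(\lC) \;=\; \phi(\lC)\,\chi(J) \;+\; \phi(J)\,\chi(\lC) \;+\; (\phi \ast' \chi)(\lC) \;=\; \phi(\lC) + \chi(\lC) + (\phi \ast' \chi)(\lC).
\]
Setting this equal to $\psi(\lC)$ and solving for $\chi(\lC)$ yields
\[
\chi(\lC) \;=\; \psi(\lC) - \phi(\lC) - (\phi \ast' \chi)(\lC),
\]
which is exactly the claimed recursion.

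The reason this is a bona fide \emph{recursion} (and not merely a tautological rewriting) is that $\Delta'$ lowers the cograding: the tensors appearing in $\Delta' \lC$ involve \ltcones strictly smaller than $\lC$ in the grading of Proposition~\ref{prop:Hopfoncones}, so $(\phi \ast' \chi)(\lC)$ is expressible in terms of values of $\chi$ on lower-dimensional cones. There is no genuine obstacle here: the only point requiring attention is the bookkeeping at $\lC = J$, where $\Delta' J = -J \otimes J$ makes both sides of the claimed formula equal to $1$, so the recursion is valid on the whole of $\QQ\cc$ rather than merely on $\ker\e$. No use is yet made of hypotheses (a) and (c) of the theorem—those will enter only in the subsequent proof of Theorem~\ref{thm:0-sub}, for which this lemma is the initial unrolling step.
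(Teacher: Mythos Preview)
Your proof is correct and follows essentially the same approach as the paper: both use the identity $\phi * \chi = \psi$ together with the relation $\phi *' \chi = \phi * \chi - \phi - \chi$ coming from the decomposition $\Delta = \Delta' + J\otimes\mathrm{id} + \mathrm{id}\otimes J$ and the normalizations $\phi(J)=\chi(J)=1$. The only cosmetic difference is that the paper carries out the computation at the level of maps in the convolution algebra rather than evaluating pointwise at a fixed $\lC$.
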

\begin{proof} The right hand side of the equation gives
$$ \psi-\phi - \phi*'(\phi^{\ast (-1)}\ast \psi)
=\psi -\phi -\phi\ast (\phi^{\ast (-1)}\ast\psi) +\phi+\phi^{\ast(-1)}\ast \psi
=\chi,$$
as needed.
\end{proof}

Now, with $\calp$ as defined in Eq.~(\ref{eq:P}), we have
$$\Delta '((C, \Lambda_C)-\sum_{i=0}^n(C_i, \Lambda_{C_i}))=\sum _{F\in \calp}c(F)\otimes (F, \Lambda_F),
$$
where $c(F):=\sum\limits _{i=0}^nc_i(F)$
while, with the convention that $C_0=C$,
$$
c_i(F):=\left\{ \begin {array}{ll}
(t(C_i,F), \Lambda_{t(C_i,F)}), &i=0, F\preceq C_0,\\
-(t(C_i,F), \Lambda_{t(C_i,F)}), &i=1,\cdots,n, F\preceq C_i, \\
0,& F\npreceq C_i.
\end{array}\right.
$$

Then by Eq.~(\mref{eq:indchi}), we have
\begin{equation}
\chi \left((C, \Lambda_C)-\sum_{i=1}^n(C_i, \Lambda_{C_i})\right)= (\psi-\phi)\left((C, \Lambda_C)-\sum_{i=1} ^n(C_i, \Lambda_{C_i})\right)
-\sum_{F\in \calp} \phi(c(F))\chi(F, \Lambda _ F).
\mlabel{eq:chirec}
\end{equation}

Let $\mathcal{P}([n])$ denote the power set of $[n]$. Consider the surjective map
$$ \lambda: \mathcal{P}([n])\backslash \emptyset \longrightarrow \calt,\quad I\mapsto C_I.$$
For $H\in \calt$, denote
$$\lambda_H:=\sum\limits_{J\in \lambda^{-1}(H)} (-1)^{|J|-1}.$$

Then the \ValP of $\phi$ in Eq.~(\mref{eq:VaSubP}) can be expressed as
\begin{equation}
\phi(C, \Lambda_C)=\sum _{H\in \calt}\lambda_{\jh }\phi (H, \Lambda_H).
\mlabel{eq:phieq}
\end{equation}
Likewise, for $H\in \calp$ and the subdivision $\{t(C_i,H)\ | \ i\in J(H) \}$ of $t(C,H)$ in Lemma~\mref{lem:transversepartition}.(\mref{it:trpa}), the \ValP for this subdivision is
\begin{eqnarray}
\phi(t(C,H), \Lambda _{t(C,H)})&=&\sum _{I\subset J(H) }(-1) ^{|I|-1}\phi \left(\bigcap _{i\in I} t(C_i, H), \Lambda _{t(C,H)}\cap {\rm lin}(\bigcap _{i\in I} t(C_i, H))\right)
\mlabel{eq:transsubd}\\
&=&
\sum_{I\subset J(H) }(-1) ^{|I|-1}\phi \left(t(C_I, H), \Lambda _{t(C,H)}\cap {\rm lin}(t(C_I, H))\right)\notag
\end{eqnarray}
by Lemma \mref{lem:transversepartition}.(\mref{it:trpb}).
Furthermore, for the lattices on the right hand side, we have
$$\Lambda _{t(C,H)}\cap {\rm lin}(t(C_I, H))=\pi _{H^\perp}(\Lambda_C) \cap {\rm lin}\left(\pi _{H^\perp}(C_I)\right)=\pi _{H^\perp}(\Lambda_C) \cap \pi _{H^\perp}({\rm lin}(C_I))
$$
which agrees with
$$
\Lambda _{t(C_I,H)}=\pi _{H^\perp}(\Lambda_{C_I})=\pi _{H^\perp}(\Lambda_C \cap {\rm lin}(C_I))
$$
by Lemma \mref{lem:int}. Therefore, the \ValP in Eq.~(\mref{eq:transsubd}) becomes
\begin {equation}
\phi(t(C,H), \Lambda _{t(C,H)})=\sum _{I\subset J(H) }(-1) ^{|I|-1}\phi (t(C_I,H), \Lambda_{t(C_I,H)} )
=\sum _{F \in \calt(H)}\lambda_{\jf }\phi (t(F,H), \Lambda _{t(F,H)}), \mlabel{eq:phieqtr}
\end{equation}
where we have set
\begin{equation}\label{eq:calT} \calt(H):=\{C_I\,|\,I\subseteq J(H) \}.\end{equation}
Then with the notation in Eq.~(\ref{eq:degi}), we set
\begin{equation}\label{eq:caltl}\calt_\ell(H):=\{C_I\,|\,I\subseteq J(H) , j(C_I)=\ell\}.\end{equation}

Now we are ready to state the key combinatorial facts for the proof of Theorem~\mref{thm:IsubP}.

\begin{prop}
With  the above notations, the following equations hold for $\ell\geq 2$.  \begin{eqnarray}
&& \sum_{F\in \calp_\ell} \left (\phi(c(F))-\sum_{m=2}^{\ell-1} \sum_{H\in \calt_m(F)} \lambda_{\jh }\phi(t(H,F), \Lambda_{t(H,F)})\right) \chi(F, \Lambda_F)
\notag\\
&=& (\psi-\phi)\left(\sum_{F\in \calt_\ell} \lambda_{\jf }(F,\Lambda_F)\right) -\sum_{G\in\calp_\ci}\sum_{H\in \alpha^{-1}(G)_\ell}\phi(t(F,H), \Lambda_{t(F,H)})\chi(H, \Lambda_H) \mlabel{eq:phirec}\\
&&-\sum_{k\geq \ell+1}\sum_{H\in \calp_k}\sum_{F\in \calt_\ell(H)} \lambda_{\jf }\phi(t(F,H), \Lambda_{t(F,H)})\chi(H, \Lambda_H).
\notag \\
\chi((C,\Lambda_C)-\sum_i (C_i, \Lambda_{C_i}))&=&
(\psi-\phi)\left ( (C,\Lambda_C)-\sum_i (C_i, \Lambda_{C_i}) - \sum_{m=2}^{\ell-1}\sum_{F\in\calt_m} \lambda_{\jf }(F, \Lambda_F)\right) \notag\\
&& - \sum_{G\in \calp_\ci} \phi(t(C,G), \Lambda _{t(C,G)}) \left(\chi(G, \Lambda_G)-\sum_{m=1}^{\ell-1} \sum_{H\in \alpha^{-1}(G)_m}\chi(H, \Lambda_H)\right)\mlabel{eq:chiform}\\
&&-\sum_{k\geq \ell}\sum_{F\in \calp_k} \left(\phi(c(F))-\sum_{m=2}^{\ell-1}\sum_{H\in \calt_m(F)} \lambda_{\jh }\phi(t(H,F), \Lambda_{t(H,F)})\right) \chi(F, \Lambda_F). \notag
\end{eqnarray}
\mlabel{pp:chiform}
\end{prop}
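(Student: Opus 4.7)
The plan is a joint induction on $\ell \geq 2$: equation (\mref{eq:chiform}) is the main recursion, with base case $\ell = 2$ coming from Lemma~\mref{lem:IndChi}, while (\mref{eq:phirec}) is the telescoping identity that drives each induction step. I would first handle the base case, then prove (\mref{eq:phirec}) as a stand-alone combinatorial identity, and finally close the induction.

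For the base case $\ell = 2$, I would start from the identity (\mref{eq:chirec}) and split the sum $\sum_{F \in \calp} \phi(c(F))\chi(F,\Lambda_F)$ along the five-way decomposition (\mref{eq:disj}) of $\calp$. Terms from $\calp_\ci$, where by definition $c(F) = (t(C,F), \Lambda_{t(C,F)})$, combine with the $j = 1$ contributions from $\calp_\civ$ via the surjection $\alpha$ of Lemma~\mref{lem:alpha}; together they yield the $-\sum_{G\in\calp_\ci}\phi(t(C,G))\bigl(\chi(G) - \sum_{H\in \alpha^{-1}(G)_1}\chi(H)\bigr)$ piece, while contributions from $\calp_\cv$ vanish by Lemma~\mref{lem:transversepartition}.(\mref{it:trpc}) together with the hypothesis that $\phi$ vanishes on non-strongly convex cones. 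The remaining terms assemble into the $k \geq 2$ sum of (\mref{eq:chiform}) at $\ell = 2$.

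To prove (\mref{eq:phirec}), the key move is to apply the $\lambda$-subdivision property (\mref{eq:phieqtr}) to the transverse cone $t(C, F)$, yielding $\phi(t(C, F)) = \sum_{H \in \calt(F)} \lambda_{J(H)} \phi(t(H, F))$. Stratifying $H$ by $j(H)$: the singletons $j(H) = 1$ cancel the $-\phi(t(C_i, F))$ entries inside $c(F)$; the range $2 \leq j(H) \leq \ell - 1$ matches the correction already subtracted on the LHS of (\mref{eq:phirec}); the value $j(H) = \ell$ produces, after collecting across all $F \in \calp_\ell$ and invoking Lemma~\mref{lem:IndChi} to re-express the accumulated terms via $\chi = \psi - \phi - \phi \ast' \chi$, both the $(\psi - \phi)$ summand and the $\alpha^{-1}(G)_\ell$ summand (the latter coming from ISFs whose parent face $G$ lies in $\calp_\ci$); and the range $j(H) > \ell$ reorganizes, by swapping the summation order, into the final triple sum with outer index $H \in \calp_k$ and inner index $F \in \calt_\ell(H)$. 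For the induction step, substituting (\mref{eq:phirec}) at level $\ell$ into (\mref{eq:chiform}) at level $\ell$ peels off the $k = \ell$ block of the last sum, and the three RHS terms of (\mref{eq:phirec}) are absorbed respectively into the enlarged subtracted sums $\sum_{m=2}^\ell$ inside $(\psi - \phi)$, $\sum_{m=1}^\ell$ inside the $\chi(G)$ correction, and $\sum_{m=2}^\ell$ in the last sum, yielding (\mref{eq:chiform}) at level $\ell + 1$.

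The main obstacle will be the combinatorial accounting in the proof of (\mref{eq:phirec}). Tracking the interplay between the five-way stratification (\mref{eq:disj}) of $\calp$ and the $j$-filtration, while disentangling which contributions land in which of the three terms on the RHS, requires delicate double-counting. The handling of faces in $\calp_\cv$ via the vanishing $\phi(t(C,F)) = 0$ coming from Lemma~\mref{lem:transversepartition}.(\mref{it:trpc}) and the hypothesis on $\phi$, and maintaining lattice consistency under projections via Lemma~\mref{lem:int}, require \emph{ad hoc} attention at each step of the bookkeeping.
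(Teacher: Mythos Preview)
Your proposal is correct and matches the paper's approach closely. Both proofs establish (\mref{eq:chiform}) by induction on $\ell$ with the base case $\ell=2$ obtained from Eq.~(\mref{eq:chirec}) after splitting $\calp$ via the five-way decomposition (\mref{eq:disj}) and using Lemma~\mref{lem:transversepartition}.(\mref{it:trpc}) to kill the $\calp_\cv$ contributions; both then use (\mref{eq:phirec}) as the telescoping identity driving the inductive step, with the $\lambda$-subdivision relation (\mref{eq:phieqtr}) for $\phi(t(C,F))$ and the recursion $\chi=\psi-\phi-\phi\ast'\chi$ of Lemma~\mref{lem:IndChi} as the two key inputs. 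The only organizational difference is that the paper first splits $\calp_\ell=\calt_\ell\coprod\calp'_\ell$ and handles the two pieces separately (expanding $\chi(F)$ for $F\in\calt_\ell$ via the recursion, then matching the residual $\calp'_\ell$ terms), whereas you stratify $\calt(F)$ by $j(H)$ uniformly across $\calp_\ell$; the resulting bookkeeping is equivalent.
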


\begin{proof} {\bf Proof of Eq.~(\mref{eq:phirec}): }
We have $\calp_\ell = \calt_\ell \coprod \calp'_\ell$ where $\calp'_\ell:=\calp_\ell \backslash \calt_\ell$.

We first consider the partial  sum
$$ \sum_{F\in \calt_\ell} \left (\phi(c(F))-\sum_{m=2}^{\ell-1} \sum_{H\in \calt_m(F)} \lambda_{\jh }\phi(t(H,F), \Lambda_{t(H,F)})\right) \chi(F, \Lambda_F)$$
on the left hand side of Eq.~(\mref{eq:phirec}) over the subset $\calt_\ell$ of $\calp_\ell$ introduced in (\ref{eq:caltl}). For $F\in \calt_\ell$, we have $F=\cap_{i\in J(F) } C_i$ with $j(F)=\ell$ and thus $\calt_\ell(F)=\{F\}$. Since $\ell\geq 2$, we have the disjoint union
$$\calt_\ell = (\calt_\ell\cap \calp_\ciii)\coprod (\calt_\ell\cap \calp_\civ) \coprod (\calt_\ell\cap \calp_\cv)$$
and
\begin{equation}
\calp'_\ell = (\calp'_\ell\cap \calp_\ciii)\coprod (\calp'_\ell\cap \calp_\civ) \coprod (\calp'_\ell\cap \calp_\cv)=\calp_\ciii'\coprod \calp_\civ'\coprod \calp_\cv'.
\label{eq:prime}
\end{equation}

Then by Eq.~(\mref{eq:phieqtr}) we have
\begin{eqnarray*}
&& \phi(c(F))-\sum_{m=2}^{\ell-1} \sum_{H\in\calt_m(F)} \lambda_{\jh }\phi(t(H,F), \Lambda_{t(H,F)}) \\
&=& \left\{ \begin{array}{ll}
\lambda_{\jf }\phi(t(F,F),\{0\})=\lambda_{\jf }, & \text{ for } F\in \calt_\ell\cap \calp_\ciii, \\
-\phi(t(C,F),\Lambda _{t(C,F)})+\lambda_{\jf }\phi(t(F,F), \{0\}) &\\
=-\phi(t(C,F),\Lambda _{t(C,F)})+\lambda_{\jf }, & \text{ for } F\in \calt_\ell\cap\calp_\civ, \\
-\phi(t(C,F),\Lambda _{t(C,F)})+\lambda_{\jf }\phi(t(F,F),\{0\}) =\lambda_{\jf }, & \text{ for } F\in \calt_\ell\cap\calp_\cv.
\end{array}
\right.
\end{eqnarray*}
where in the last case we have used $\phi(t(C,F),\Lambda _{t(C,F)})=0$ as a consequence of
Lemma~\ref{lem:transversepartition}.(\mref{it:trpc}).
Therefore we have

\begin{eqnarray}
&&\sum_{F\in \calt_\ell}\left(\phi(c(F))-\sum_{m=2}^{\ell-1} \sum_{H\in\calt_m(F)} \lambda_{\jh }\phi(t(H,F), \Lambda _{t(H,F)})\right)\chi(F, \Lambda_F) \mlabel{eq:rec1}\\
&=&
-\sum_{G\in \calp_\ci} \sum_{F\in \alpha^{-1}(G)_\ell\cap\calt} \phi(t(C,G),\Lambda _{t(C,G)})\chi(F, \Lambda_F)
+ \sum_{F\in \calt_\ell}\lambda_{\jf }\chi(F, \Lambda_F).
\notag
\end{eqnarray}

By definition, the second term on the right hand side of the above equation reads
$$(\psi-\phi)\left(\sum_{F\in \calt_\ell} \lambda_{\jf } (F, \Lambda_F)\right) - \sum_{(F,H)\in U_\ell} \lambda_{\jf }\phi(t(F,H),\Lambda _{t(F,H)})\chi(H,\Lambda_H),$$
where
$$U_\ell:=\left\{ (F,H)\,|\, F\in \calt_\ell, 0\neq H\precneqq F\right\}.$$
Note that $U_\ell$ is the disjoint union of the sets
$$U_{\ell,\ell}:=\{(F,H)\in U_\ell\,|\, j(H)=\ell\} \text{ and }
U_{\geq \ell +1}:=\{(F,H)\in U_\ell\,|\, j(H)\geq \ell+1\}.$$
On the one hand, for $(F,H)\in U_{\ell,\ell}$, we have $J(F) \subset J(H) $. The fact that they have the same cardinal implies the equality $J(F) = J(H) $. Moreover, $F=C_{J(F)}$. Since $H\precneqq F$, we have $H\in \calp'_\ell$ and obtain
$$ \sum_{(F,H)\in U_{\ell,\ell}} \lambda_{\jf }\phi(t(F,H),\Lambda _{t(F,H)})\chi(H,\Lambda_H) =\sum_{H\in \calp'_\ell}\sum_{F\in \calt_\ell(H)} \lambda_{\jf }\phi(t(F,H),\Lambda _{t(F,H)})\chi(H, \Lambda_H).$$

On the other hand,
\begin{eqnarray*}
\sum_{(F,H)\in U_{\ell,\geq\ell+1}} \lambda_{\jf }\phi(t(F,H),\Lambda _{t(F,H)})\chi(H, \Lambda_H)
&=& \sum_{0\neq H,j(H)\geq\ell+1}\sum_{F\in \calt_\ell(H)} \lambda_{\jf }\phi(t(F,H),\Lambda _{t(F,H)} )\chi(H,\Lambda_H)\\
&=& \sum_{k\geq \ell+1} \sum_{H\in \calp_k}\sum_{F\in \calt_\ell(H)} \lambda_{\jf }\phi(t(F,H),\Lambda _{t(F,H)})\chi(H,\Lambda_H).
\end{eqnarray*}

Inserting  the last two identities into Eq.~(\ref{eq:rec1}) yields the following expression for the left hand side of Eq.~(\mref{eq:phirec})
\begin{eqnarray*}
&&\sum_{F\in \calt_\ell \coprod \calp'_\ell}
\left (\phi(c(F))-\sum_{m=2}^{\ell-1} \sum_{H\in \calt_m(F)} \lambda_{\jh }\phi(t(H,F),\Lambda _{t(H,F)})\right) \chi(F, \Lambda_F) \\
&=&\sum_{F\in \calp'_\ell}
\left (\phi(c(F))-\sum_{m=2}^{\ell-1} \sum_{H\in \calt_m(F)} \lambda_{\jh }\phi(t(H,F),\Lambda _{t(H,F)})\right) \chi(F, \Lambda_F)\\
&&-\sum_{G\in \calp_\ci} \sum_{F\in \alpha^{-1}(G)_\ell\cap\calt} \phi(t(C,G),\Lambda _{t(C,G)})\chi(F)
+\sum_{F\in \calt_\ell}\lambda_{\jf }\chi(F, \Lambda_F)\\
&=& \sum_{F\in \calp'_\ell}
\left (\phi(c(F))-\sum_{m=2}^{\ell} \sum_{H\in \calt_m(F)} \lambda_{\jh }\phi(t(H,F),\Lambda _{t(H,F)})\right) \chi(F, \Lambda_F)\\
&&-\sum_{G\in \calp_\ci} \sum_{F\in \alpha^{-1}(G)_\ell\cap\calt} \phi(t(C,G),\Lambda _{t(C,G)})\chi(F, \Lambda_F)\\
&&+(\psi-\phi)\left(\sum_{F\in \calt_\ell} \lambda_{\jf } (F, \Lambda_F)\right)\\
&&-\sum_{k\geq \ell+1} \sum_{H\in \calp_k}\sum_{F\in \calt_\ell(H)} \lambda_{\jf }\phi(t(F,H),\Lambda _{t(H,F)})\chi(H, \Lambda_H).
\end{eqnarray*}

By Eq.~(\mref{eq:phieqtr}), the cofactor of $\chi(F,\Lambda_F)$ in the first sum in the above formula is
\begin{eqnarray*}
&&\phi(c(F))-\sum_{m=2}^{\ell} \sum_{H\in \calt_m(F)} \lambda_{\jh }\phi(t(H,F),\Lambda _{t(H,F)})\\
&=&
\left\{\begin{array}{ll}
0, & \text{ for } F\in \calp'_\ciii, \\
-\phi(t(C,F),\Lambda _{t(C,F)}), & \text{ for }F\in \calp'_\civ,\\
-\phi(t(C,F),\Lambda _{t(C,F)})=0, & \text{ for }F\in \calp'_\cv,
\end{array}\right.
\end{eqnarray*}
where we have applied the notations in Eq.~(\ref{eq:prime}), and in the last case, $\phi(t(C,F), \Lambda _{t(C,F)})=0$ using Lemma~\mref{lem:transversepartition}.(\mref{it:trpc}). Thus this sum
becomes
\begin{eqnarray*}
-\sum_{H\in \calp'_\civ} \phi(t(C,H),\Lambda _{t(C,H)})\chi(H, \Lambda_H)
&=&-\sum_{H\in \calp'_\civ} \phi(t(C,\alpha(H)),\Lambda _{t(C,\alpha (H))})\chi(H, \Lambda_H)\\
&=&-\sum_{G\in\calp_\ci}\sum_{H\in\alpha^{-1}(G)_\ell \cap \calp^\prime} \phi(t(C,G),\Lambda _{t(C,G)})\chi(H, \Lambda_H).
\end{eqnarray*}
This proves that the left hand side of Eq.~(\mref{eq:phirec}) agrees with the right hand side.
\smallskip

\noindent
{\bf Proof of Eq.~(\mref{eq:chiform}): } We prove the equation by induction on $\ell\geq 2$. We first verify the case when $\ell=2$. By definition,
\begin{eqnarray*}
\chi((C, \Lambda_C )-\sum_i (C_i, \Lambda_{C_i} ))&=& (\psi-\phi)\left((C, \Lambda_C )-\sum_i (C_i, \Lambda_{C_i} )\right)
-\sum_{F\in \calp} \phi(c(F))\chi(F, \Lambda_F) \\
&=& (\psi-\phi)\left((C, \Lambda_C )-\sum_i (C_i, \Lambda_{C_i} )\right)\\
&&-\sum_{F\in \calp_0\cup \calp_1} \phi(c(F))\chi(F, \Lambda_F)
-\sum_{k\geq 2}\sum_{F\in \calp_{k}} \phi(c(F))\chi(F, \Lambda_F).
\end{eqnarray*}
Now we see that the first and third sums on the right hand side readily agree with the corresponding sums on the right hand side of Eq.~(\mref{eq:chiform}).

For the second sum, note that
$$ \calp_0 = \calp_\ci, \quad
\calp_1=\calp_\cii \coprod \calp_{\ciii,1} \coprod \calp_{\civ,1}\coprod \calp_{\cv,1}.$$
By Lemma~\mref{lem:transversepartition}.(\mref{it:trpa}), $c(F)=0$ for $F\in \calp_\cii$. Also $\calp_{\ciii,1}=\emptyset$ by definition.
By Lemma~\mref{lem:transversepartition}.(\mref{it:trpb}), we have $\phi(t(C,F))=0$ for $F\in \calp_{\cv,1}$.

Notice that for $G\in \calp_{1}$, if $F^G\in \alpha ^{-1}(G)$ and $j(F^G)=1$, then we can take $J(F^G)=\{i\}$. So by Lemma \mref {lem:transversepartition}, $t(C,F)=t(C_i, F^G)$. This proves that the second sum agrees with  the second sum in Eq.~(\mref{eq:chiform}) when $\ell=2$. Therefore Eq.~(\mref{eq:chiform}) holds when $\ell=2$.

The inductive step follows from Eq.~(\mref{eq:phirec}) applied to the third sum.
\end{proof}

Now we are ready to prove Theorem~\mref{thm:IsubP}.
\smallskip

\noindent
\begin {proof} ({\it of\ \,Theorem~ \mref {thm:IsubP}}) We prove the statement by induction on the dimension of $C$,  the case   $\dim C=1$ being trivial.
Assume that the theorem holds for cones of dimension less or equal to $k\geq 1$ and consider a cone $C$ of dimension $k+1$. Let a subdivision of $C$ be given. Taking $\ell$ sufficiently large (say greater than the number $n$ of the $C_i$'s in the subdivision of $C$) in Eq.~(\mref {eq:chiform}), we have
\begin {eqnarray*}
\chi ((C, \Lambda_C )-\sum_i (C_i, \Lambda_{C_i} ))&=&(\psi -\phi)\left((C, \Lambda_C )-\sum_i (C_i, \Lambda_{C_i} )-\sum _{F\in \calt_{\geq 2}} \lambda_{\jf }(F,\Lambda_F) \right)\\
&&-\sum _{G\in \calp_\ci} \phi (t(C,G),\Lambda_{t(C,G)})\left(\chi (G, \Lambda_G)-\sum _{H\in \alpha ^{-1}(G)}\chi (H, \Lambda_H)\right).
\end{eqnarray*}
By the \ValP of $\phi$ and $\psi$, the first term on the right hand side is zero.
By the induction hypothesis, the second term is also zero since $\{H\in \alpha^{-1}(G)\}$ gives a subdivision of $G$ by Lemma~\mref{lem:alpha}.(\mref{it:alpha2}). This completes the induction.
\end{proof}

\section{Euler-Maclaurin formulae for \ltcones}
\mlabel{sec:emf}

We derive the Euler-Maclaurin formula from the above results combined with an \abf on lattice cones, which  generalizes Connes-Kreimer renormalization scheme.

\subsection{Meromorphicity of generating functions}
\mlabel{ss:reg}
From now on, we work in the filtered lattice space $\RR ^\infty$, with the standard lattice $\Lambda _\infty=\ZZ^\infty$ and a fixed basis $\{e_1, e_2, \cdots \}$.

To a cone $C$ in a lattice filtered space $V$, one can assign two meromorphic functions: the generating function (or the exponential discrete sum) $S(C)$ and the exponential integral $I_V(C)$ \cite {BV1,GP,GPZ2, La}. These can be extended to a {\ltcone} by the subdivision technique.

It is simple for simplicial cones. If $\lC\in \cc_k$ is a simplicial \ltcone (so in particular it is strongly convex), then the set
$$ \check C^-:=\check C^-_k:=\Big\{\vec \e:=\sum_{i=1}^k \e_ie_i^* \,\Big|\, \langle \vec x, \vec \e \rangle <0 \text{ for all } \vec x\in C\Big\}$$
is of dimension $k$. Here $\langle \vec x, \vec \e\rangle$ denotes the natural pairing $V_k\otimes V_k^\ast \to \RR$. Let $C^o$ denote the interior of $C$. For $\vec \e\in \check C^-$, then define
\begin{equation}
S^o\lC(\vec \e ): =
\sum_{\vec{n}\in C^o\cap \Lambda_C} e^{\langle \vec n, \vec \e \rangle}.
\mlabel{eq:osum}
\end{equation}

If $v_1, \cdots v_k \in \Lambda _C$ is a set of primary generators of $C$, and $u_1, \cdots, u_k$ is a basis of $\Lambda _C$, for $1\leq i\leq k$, let $v_i=\sum\limits_{j=1}^k a_{ji}u_j, a_{ji}\in \ZZ $. Define linear functions
$L_i:=L_{v_i}:=\sum\limits_{j=1}^k a_{ji}\langle u_j, \vec \e\rangle$ and let $w\lC $ denote the absolute value of the determinant of the matrix $[a_{ij}]$, then
\begin{equation}
I \lC (\vec \e ): =(-1)^k\frac
{w\lC}{L_1\cdots L_k}.
\mlabel{eq:ExpI}
\end{equation}

\begin{remark} We use a sign convention that is different from~\cite{GPZ2} in order to make the Euler-Maclaurin formula simpler.
\end{remark}

Then by the subdivision technique, we have
\begin{prop-def}
For a \ltcone \ $\lC$, the germ of functions $\sum\limits _{F\in \mathcal{F}^o(\conefamilyc )}   S^o(F,\Lambda_F)$ and $\sum\limits_{i\in [n]} I(C_i, \Lambda _{C_i})$ do not depend on the choice of the simplicial subdivision $\conefamilyc=\{(C_i,\Lambda _{C_i})\}_{i\in [n]}$ of $\lC$. Thus we define
$$S^o\lC :=\sum\limits _{F\in \mathcal{F}^o(\conefamilyc )}   S^o(F,\Lambda_F)
$$
and
$$I\lC =\sum_{i\in [n]} I(C_i, \Lambda _{C_i})
$$
for any simplicial subdivision $\conefamilyc=\{(C_i,\Lambda _{C_i})\}_{i\in [n]}$ of $\lC$.
\mlabel{defn:SI}
\end{prop-def}

We next view  the generating functions $S^o\lC(\vec \e)$   as  meromorphic germs with linear poles at zero, see \cite {GPZ3} for a more detailed discussion.

\begin{defn} Let $k$ be a positive integer.
\begin{enumerate}
\item
A {\bf germ of meromorphic functions at 0} on $\C^k$ is the quotient of two holomorphic functions in a neighborhood of 0 inside $\C^k$.
\item
A germ of meromorphic functions $f(\vec \e)$ on $\C^k$ is said to have {\bf  linear poles at zero with lattice coefficients} if there exist vectors $L_1, \cdots, L_n\in \Lambda_k\otimes \QQ  $ (possibly with repetitions) such that $f\,\Pi_{i=1}^n L_i$ is a holomorphic germ at zero whose Taylor expansion has lattice coefficients.
\item
We will denote by $\calm_\QQ  (\C^k)$  the set of germs of meromorphic functions on $\C^k$ with linear poles at zero with lattice coefficients. It is a linear subspace over $\QQ$.
\end{enumerate}
\mlabel{de:fr}
\end{defn}

Then composing with the projection $\C^{k+1} \to \C^k$   dual to the inclusion $j_k:\C^k\to \C^{k+1}$ yields the embedding
$$\calm_\coef  (\C^k)\hookrightarrow \calm_\coef  (\C^{k+1}),$$
thus giving rise to the direct limit
$$\calm_\coef  ( \C^\infty):
=\dirlim  \calm_\coef  (\C^k)
=\bigcup_{k=1}^\infty  \calm_\coef  (\C^k).
$$

\begin{lemma} For a simplicial \ltcone \ $\lC\in \cc _k$, the germs of functions $S^o\lC(\vec \e )$ lies in $\calm_\Q(\C ^k)$.
\mlabel{lem:rationality}
\end{lemma}

\begin{proof} We first prove the proposition for a smooth {\ltcone}\  $\lC$.
Let $C= \langle v_1,\cdots, v_m\rangle$ with $\{v_1,\cdots, v_m\}$ being a basis of $\Lambda _C$.
Since an element $\vec x$ in $C\cap \Lambda _C$ can be written in a unique way as $\sum\limits_{j=1}^m n_j v_j$ where $n_j\in \Z_{\geq 0}$,  for $\vec \e =\sum\limits_{j=1}^m \e_je_j^* \in  \check C^-$, we have
\begin{equation}
S^o\lC (\vec \e )
 :=   \prod_{j=1}^m \sum_{ n_j\in \Z_{\geq 1}} e^{  n_j\, \langle v_j, \vec \e \rangle }
 =   \prod_{j=1}^m\frac{e^{\langle v_j,\vec \e\rangle}}{1-e^{\langle v_j, \vec \e\rangle}}
 =   \prod_{j=1}^m\frac{e^{L_j(\vec \e )}}{1-e^{L_j(\vec \e )} },
\mlabel{eq:SOpenSmooth}\end{equation}
where $L_j(\vec \e )=\langle v_j, \vec \e \rangle$. They are holomorphic on $\check C^{-}$ and extend to germs of meromorphic functions on $\C^k$ with simple linear poles at $L_1(\vec \e )=0, \cdots ,L_{n}(\vec \e )=0$.

Indeed, from the generating power series $\frac{x}{e^x-1}=\sum\limits_{n=0}^\infty B_n \frac{x^n}{n!}$ of Bernoulli numbers, we have that $\frac{1}{1-e^x}=-\frac{1}{x} \frac{x}{e^x-1}$ is in $\calm_\Q(\C)$. Then the same holds for $\frac{e^x}{1-e^x}=\frac{1}{1-e^x}-1$. Thus for each linear form $L$ on $\C ^k$ with lattice coefficients, both $\frac{L}{1-e^L}$ and $\frac{e^L}{1-e^L}$ are in $\calm_\Q(\C ^k)$. For a smooth \ltcone, the conclusion that $S^o\lC(\vec \e )$ lies in $\calm_\Q(\C ^k)$  follows from Eq.~(\mref{eq:SOpenSmooth}) since $\calm_\Q(\C ^k)$ is closed under multiplication.

Next for a simplicial \ltcone \ $\lC$, we prove the statement by taking a smooth subdivision and applying Proposition-Definition~\mref{defn:SI}, noting that faces of a smooth \ltcone are smooth by Proposition~\mref{pp:smface}.
\end{proof}

Therefore, we have linear map
$$S^o: \QQ\cc \to \calm_\Q(\C^\infty), \quad \lC\mapsto S^o\lC.$$
By definition, the following conclusion holds.
\begin{coro}Let $\lC$ be a \ltcone \ and let $\conefamilyc =\{(C_1, \Lambda _C),\cdots, (C_r, \Lambda _C) \}$ be  a (not necessarily simplicial)  subdivision of $C$.
Then we have
$$S^o\lC=\sum _{F\in \mathcal{F}^o(\conefamilyc)}   S^o(F,\Lambda_F)$$
and
$$I\lC =\sum_{i\in [n]} I(C_i, \Lambda _{C_i})$$
in $\calm_\Q(\C^\infty)$, that is, $S^o$ has the \SSubP and $I$ has the \ISubP.
\mlabel{coro:MlSub}
\end{coro}

\begin{defn}
For a \ltcone $\lC\in \cc _k$, define its {\bf (closed) generating function} by
\begin{equation}S^c \lC= \sum_{F\preceq C} S^o (F,\Lambda _F),
\mlabel{eq:rcczv}
\end{equation}
giving rise to the linear map
$$S^c: \QQ\cc \to \calm_\Q(\C^\infty),\quad \lC\mapsto S^c\lC.$$
\end{defn}

By Proposition \mref {prop:ValPSSubP},
we have
\begin {coro} $S^c:\QQ \cc \to \calm_\Q(\C ^\infty)$ has the \ValP.
\end{coro}

We now state one more key property of $S^o\lC$ and $S^c\lC$.
\begin {prop} If $\lC$ is not strictly convex, then $S^o\lC $ and $S^c \lC$ are both zero.
\mlabel{pp:0S}
\end{prop}

\begin{proof}
First consider the case when $C$ is a one-dimensional subspace.
So $(C,\Lambda_C)=(\RR_{\geq 0} u, \ZZ u)$. Then $\{\cone{u}, \cone{-u}\}$ is a smooth subdivision of $C$. Then as in Eq.~(\mref{eq:SOpenSmooth}), we obtain
$$S^o (C,\ZZ u)(\vec \e )=S^o(\cone{u},\ZZ u)(\vec \e)+S^o(\{0\},\{0\})(\vec \e) +S^o(\cone{-u},\ZZ u)(\vec \e)
=\frac {e^{<u,\vec \e >}}{1-e^{<u,\vec \e>}}+1+\frac {e^{<-u,\vec \e >}}{1-e^{<-u,\vec \e>}}=0.
$$
Since $C=\RR u$ does not have a proper face, by Eq.~(\mref{eq:rcczv}) we have
$$S^c (C,\ZZ u)(\vec \e )=S^o (C,\ZZ u)(\vec \e )=0.$$

Next consider the case  $(C,\Lambda _C)$ where $C$ is a linear space of dimension $k$. Then $C$ has no proper face. Take a lattice basis $\{v_1, \cdots , v_k\}$ of $\Lambda _C$ and denote $C_{\alpha_1\alpha_2\cdots\alpha_k}: =\cone{\alpha_1 v_1,\alpha_2 v_2,\cdots ,\alpha_k v_k}$ for $\alpha_i\in \RR, 1\leq i\leq k$.
Then the family of {\ltcone}s $\{(C_{\alpha_1\alpha_2\cdots\alpha_k}, \Lambda _C)\ |\, \alpha _i=\pm 1, 1\leq i\leq k\}$ provides  a simplicial subdivision of $\lC$. Thus
${\mathcal F}^o(C\sim \cup C_{\alpha_1\alpha_2\cdots\alpha_k})=\{C_{\alpha_1\alpha_2\cdots\alpha_k} \,|\, \alpha _i=0, \pm 1, 1\leq i\leq k\}
$
and
$$S^o(C_{\alpha_1\alpha_2\cdots\alpha_k}, \Lambda _C\cap \lin( C_{\alpha_1\alpha_2\cdots\alpha_k} )) (\vec \e)=\prod _{i, \alpha _i\not =0}\frac{e^{<\alpha_i v_i, \vec \e >}}{1-e^{<\alpha_i v_i,\vec  \e >}}.
$$
Thus
\begin{eqnarray*}
S^o\lC(\vec \e)&=&\sum_{\alpha_i=0,\pm 1, 1\leq i\leq k} S^o(C_{\alpha_1\alpha_2\cdots\alpha_k}, \Lambda _C\cap \lin(C_{\alpha_1\alpha_2\cdots\alpha_k} )) (\vec \e) \\
&=&\prod _i\left(\frac {e^{<v_i,\vec \e >}}{1-e^{<v_i,\vec \e>}}+1+\frac {e^{<-v_i,\vec \e >}}{1-e^{<-v_i,\vec \e>}}\right)=0.
\end{eqnarray*}

Finally consider the case when $C$ is a cone that contains a linear subspace. By Proposition 3.4.(a) in \cite {GPZ2}, we have
$C=\{v+u\ | \ v \in L, u\in C'\}$, where $L$ is a linear subspace and $C'$ is a strongly convex cone in the orthogonal complement $\lin (L;\lin (C))^\perp$ of $L$ in $\lin(C)$. Therefore any element in $C$ has a unique decomposition $v+u$ with $v\in L$ and $u \in C'$. Let $\Lambda _L $ and $\Lambda _{C'}$ be the projection of $\Lambda _C$ in $L$ and $\lin^\perp (L; {\rm lin} (C))$ respectively.
Picking a basis $\{v_1, \cdots , v_k\}$ of $\Lambda _L$,
the set $\{C_{\alpha_1,\alpha_2,\cdots,\alpha_k}+ C^\prime\,|\,\alpha _i=\pm 1\}$ provides  a subdivision of $C$. Further,
$$S^o(C_{\alpha_1,\alpha_2,\cdots,\alpha_k}+C^\prime, \Lambda_C )(\vec \e)=S^o(C_{\alpha_1,\alpha_2,\cdots,\alpha_k}, \Lambda_L)(\vec \e)S^o(C^\prime,\Lambda_{C^\prime})(\vec \e).
$$
So as in the case of a linear subspace, we have
$$S^o \lC(\vec \e)=S^o(L, \Lambda_L)(\vec \e)S^o(C', \Lambda _{C'})(\vec \e)=0.
$$

For $S^c\lC$, note that any face of $C$ contains the above $L$. Therefore $S^c\lC=0$ by Eq.~(\mref{eq:rcczv}).
\end {proof}

\subsection{\abf}
We first give a general formulation of the \abf before applying it to the study of lattice cones.
\subsubsection{The general result}
\mlabel{sec:abd}
We give a generalization of the \abf of Connes-Kreimer~\mcite{CK} for connected coalgebras without the need for either a Hopf algebra or a Rota-Baxter algebra. We begin with a lemma (see e.g. \cite[Prop. II.3.1]{Ma}).

\begin{lemma}
Let $\coalg=\bigoplus_{n\geq 0} \coalg^{(n)}$ be a connected cograded coaugmented coalgebra  with coaumentation $u$. Denote $J=u(1)$ and let $A$ be a commutative algebra with unit $1_A$. Let $\ast$ be the {\bf convolution product} on the algebra ${\mathcal L}(\coalg,A)$ of linear maps from $\coalg $ to $A$ and let $\varphi\in {\mathcal L}(\coalg,A)$ be  such that  $\varphi(\I )=1_A$. Then $\varphi$ has a convolution inverse $\varphi^{\ast (-1)}:\coalg\to A$ for which $\varphi^{\ast (-1)}(\I )=1_A$. Consequently, $${\mathcal G}(\coalg,A):=\{\varphi\in {\mathcal L}(\coalg,A)\,\big|\,  \varphi(\I )=1_A\}$$ endowed with the convolution product is a group.
\mlabel{lem:phiinv}
\end{lemma}

\begin{thm}
Let $\coalg=\bigoplus_{n\geq 0} \coalg^{(n)}$ be a connected cograded coaugmented coalgebra. Let $A$ be a unitary algebra. Let $A=A_1\oplus A_2$ be a linear decomposition  such that $1_A\in A_1$.
 Let $P$ be the projection of $A$ to $A_1$ along $A_2$.
Given  $\varphi\in {\mathcal G}(\coalg,A)$,
define maps $\varphi_i\in {\mathcal G}(\coalg,A), i=1,2$,  by the following recursive formulae on $\ker \e$:
\begin{eqnarray}
\varphi_1(x)&=&-P\Big(\varphi(x)+\sum_{(x)} \varphi_1(x')\varphi(x'')\Big),  \mlabel{eq:phi-}\\
\varphi_2(x)&=&(\id_A-P)\Big(\varphi(x)+\sum_{(x)} \varphi_1(x')\varphi(x'')\Big).
\mlabel{eq:phi+}
\end{eqnarray}
\begin{enumerate}
\item
We have  $\varphi_i(\ker \e)\subseteq A_i$  $($hence  $\varphi_i: \coalg \to \bfk 1_A + A_i$$)$. Moreover, the following factorization holds
\begin{equation}
\varphi=\varphi_1^{\ast (-1)} \ast \varphi_2.
\mlabel{eq:abf}
\end{equation}
\mlabel{it:abf}
\item
$\varphi_1$ and $\varphi_2$ are the unique maps in $ {\mathcal G}(\coalg,A)$  such that   $ \varphi_i(\ker \e)\subseteq A_i$ for $i=1, 2,$ and Eq.~(\mref{eq:abf}) holds.
\mlabel{it:uniq}
\item If moreover $A_1$ is a subalgebra of $A$, then $\phi_1 ^{\ast (-1)}$ lies in $ {\mathcal G}(\coalg,A_1)$. \mlabel{it:phiinv}
\end{enumerate}
\mlabel{thm:abf}
\mlabel{thm:Birkhoff}
\end{thm}

\begin{proof}
(\mref{it:abf})
The inclusion $\varphi_i(\ker \e)\subseteq A_i, i=1,2,$  follows from the definitions. Further
$$\varphi_2(x)= (\id_A -P)\Big(\varphi(x)+\sum_{(x)}\varphi_1(x')\varphi(x'')\Big)
= \varphi(x) +\varphi_1(x) +\sum_{(x)}\varphi_1(x')\varphi(x'')
= (\varphi_1\ast \varphi)(x).
$$
Since $\varphi_1(\I)= 1_A$, $\varphi_1$ is invertible for the  convolution product in $A$ by Lemma~\mref{lem:phiinv}. Then Eq.~(\mref{eq:abf}) follows.

\smallskip

\noindent
(\mref{it:uniq}) Suppose there are $\psi_i\in {\mathcal G}\left(\coalg,A\right), i=1,2,$ with $  \psi_i(\ker \e)\subseteq A_i$ such that $\varphi=\psi_1^{\ast (-1)} \ast \psi_2 $.
We prove $\varphi_i(x)=\psi_i(x)$ for $i=1,2, x\in \coalg^{(k)}$ by induction on $k\geq 0$. These equations hold for $k=0$. Assume that the equations hold for $x\in\coalg^{(k)}$ where $k\geq 0$. For $x\in \coalg^{(k+1)}\subseteq \ker(\e)$, by $  \varphi_2=\varphi_1\ast \varphi$ and  $  \psi_2= \psi_1 \ast \varphi,$  we have
 $$ \varphi_{2}(x)= \varphi_1(x)+ \varphi(x)+ \sum_{(x)}\varphi_1(x^\prime)\varphi(x^{\prime \prime}), \quad \psi_{2}(x)= \psi_1(x)+ \varphi(x)+ \sum_{(x)}\psi_1(x^\prime)\varphi(x^{\prime \prime}),$$
where we have made use of
 $\varphi_1(\I)=\psi_1(\I)=\varphi(\I)=1_A$ . Hence by the induction hypothesis, we have
 $$
 \varphi_{2}(x)-\psi_{2}(x)=\varphi_{1}(x)-\psi_{1}(x)+\sum_{(x)}\big(\varphi_{1}(x^{\prime  })-\psi_{1} (x^{\prime  })\big) \varphi(x^{\prime \prime})=\varphi_{1}(x)-\psi_{1} (x)\in A_{1}\cap A_2=\{0\}.$$
 Thus
$\varphi_i(x)=\psi_i (x), i=1,2,$ for all $x\in \ker(\e)$.
\smallskip

\noindent (\mref{it:phiinv}) If   $A_1$ is a subalgebra, then it follows from Lemma \mref{lem:phiinv} applied to $A_1$ instead of $A$, that
$\varphi_1$ is invertible in $A_1$.
\end{proof}

\subsubsection{Application to lattice cones}

We now focus on the filtered lattice space $\RR ^\infty$, let  $Q(\cdot,\cdot)$ denote the inner product chosen in Eq.~(\ref{eq:Q}).
In this setup, we have constructed two linear maps:
$$S^o: \QQ \cc\to \calm_\Q(\C ^\infty)
\quad \text{and}
\quad S^c: \QQ \cc\to \calm_\Q(\C ^\infty).
$$

Let $\calm_{\coef ,+}(\C^k)$ denote the space of  germs of holomorphic functions at zero in $\C^k$ whose Taylor expansions at zero have lattice coefficients. We set
$$\calm_{\coef ,+} (\C^\infty) :=\dirlim \calm_{\coef ,+} (\C^k) =\bigcup_{k=0}^\infty\calm_{\coef ,+} (\C^k).$$
Then $\calm_{\coef ,+} (\C^k)$ (resp. $\calm_{\coef ,+} (\C^\infty)$) is a unitary subalgebra of $\calm _{\QQ }(\C ^k)$ (resp. $\calm _{\QQ }(\C ^\infty)$).

The filtered lattice Euclidean space $(\RR ^\infty, Q(\cdot, \cdot))$ allows us to apply \cite[Theorem~4.4]{GPZ3} to obtain the linear decomposition
\begin{equation}\label{eq:decocalm}\calm _{\QQ }(\C ^\infty)=\calm _{\QQ, + }(\C ^\infty)\oplus \calm _{\QQ, -}(\C ^\infty).\end{equation}
Here $\calm_{\QQ,-}(\C^\infty)$ is the subspace spanned by {\bf polar germs}, defined to be
germs of meromorphic functions at zero of the form
$$\frac {h(\ell_1, \cdots , \ell_m)}{L_1^{s_1}\cdots L_n^{s_n}},
$$
where
\begin{enumerate}
\item
$h$ lies in $\calm_{\coef ,+}(\C^m)$,
\item
$\ell_1, \cdots, \ell_m, L_1, \cdots ,L_n$  lie in $\Lambda _k\otimes \coef $, with $L_1, \cdots ,L_n$ linearly independent, such that
$$Q(\ell_i, L_j)=0 \quad \text{ for all } (i,j)\in [m]\times [n].$$
\item
$s_1,\cdots, s_n$ are positive integers.
\end{enumerate}

Since $\calm _{\Q,+}(\C ^\infty )$ is a unitary subalgebra, the \abf in Theorem~\mref{thm:abf} applies, with $\coalg=\QQ \cc$ and
\begin{equation}\label{eq:pi+}A= \calm_\Q (\C ^\infty),\quad A_1= \calm _{\Q,+}(\C ^\infty), \quad A_2=\calm_{\Q,-}(\C^\infty),\quad P=\pi _+:\calm_\QQ(\C ^\infty)\to \calm_{\QQ,+}(\C ^\infty),
\end{equation}
which is the orthogonal projection onto the holomorphic part $\calm_{\QQ,+}(\C ^\infty)$ along the space   $\calm_{\QQ,-}(\C ^\infty)$ of polar germs by means of the decomposition in Eq.~(\ref{eq:decocalm}).
We consequently obtain the following theorem.
\begin {coro} $($\text{\bf{\abf for generating functions}}$)$ For the linear map
$$S^\#:\QQ \cc\to \calm_\Q(\C ^\infty),
$$
where $\#\in \{o, c\}$, there exist unique linear maps $S^\#_1: \QQ \cc \to \calm_{\Q,+}(\C^\infty)$ and
$S^\#_2: \QQ \cc \to \Q+\calm ^R_{\Q,-}(\C^\infty)$, with $S^\#_1(\{0\},\{0\})=1$, $S^\#_2(\{0\},\{0\})=1$, such that
 \begin{equation}
 S^\#= (S^\#_1)^{\ast (-1)}\ast S^\#_2.
\mlabel{eq:abfdC}
\end{equation}
\mlabel {cor:abfd}
\end{coro}

We shall provide an interpretation of the linear maps $S^o_1$, $S^o_2$ and $S^c_1$, $S^c_2$ in the context of Euler-Maclaurin formula. Before that,  we deduce a relation between the factors in open and closed cases.
\begin{prop} For $\lC \in \cc$, we have
\begin{equation}
S^o_2\lC= S^c_2 \lC
\mlabel{eq:cumfo2}
\end{equation}
and
\begin{equation}
(S^c_1)^{*(-1)}\lC=\sum _{G \preceq C}(S^o_1)^{*(-1)} (G, \Lambda_G)
\mlabel{eq:cumfo1}
\end{equation}
\mlabel{pp:oeml}
\end{prop}

\begin{proof} Let $\lC\in \cc$. By Eqs.~(\mref{eq:rcczv}) and (\mref{eq:abf}) we have
\begin{eqnarray*}
S^c \lC&=&\sum _{F\preceq C}S^o(F,\Lambda _F)\\
&=&\sum _{F\preceq C} \sum_{G\preceq F}(S^o_1)^{*(-1)} (t(F,G), \Lambda_{t(F,G)})S^o_2(G,\Lambda _G)\\
&=&\sum _{G\preceq C}\left(\sum _{G\preceq F\preceq C} (S^o_1)^{*(-1)} (t(F,G),\Lambda _{t(F,G)})\right)S^o_2(G,\Lambda _G).
\end{eqnarray*}
Let $B(C,\Lambda_C)$ denote the right hand side of Eq.~(\mref{eq:cumfo1}). Then we have
$$B(t(C,G),\Lambda_{t(C,G)})=\sum_{H\preceq t(C,G)} (S^0_1)^{\ast (-1)}(H,\Lambda_H) =\sum_{G\preceq F\preceq C}(S^o_1)^{\ast (-1)}(t(F,G),\Lambda_{t(F,G)})$$
by Proposition~\mref{pp:transversecone}.(\mref{it:com1d}). Thus
$ S^c \lC = B\ast S^o_2$. Since the ranges of $B$ and $S^o_2$ are in $\calm_{\QQ,+}$ and $\calm_{\QQ,-}$ respectively, the desired equations hold from the uniqueness of the \abf $S^c=(S^c_1)^{\ast (-1)}\ast S^c_2$.
\end{proof}

\subsubsection {Continuous subdivision property of $S^c_2$}
\mlabel{ss:subds}
Now let us study the subdivision properties of factors in the \abfs for $S^c$ and $S^c$. Let $$\mu ^o=(S^o_1)^{*(-1)}, \quad \mu ^c=(S^c_1)^{*(-1)}.$$

\begin{theorem} The  linear maps $\mu^o$ and $\mu^c$ on $\QQ \cc$ have the projection formulae:
  $$\mu^o =\pi_+ \,S^o\quad \text{and} \quad
\mu^c =\pi_+ \,S^c .
  $$
\mlabel{prop:mu}
\end{theorem}
\begin{proof} Let $\lC$ be a \ltcone \ and $(F,\Lambda_F)$ be a face of it. Since the linear spaces $\lin F$ and  $\lin t(C,F)$ are perpendicular in $V$ for the inner product $Q(\cdot,\cdot)$, the linear functions from $F$ and from $t(C,F)$ are perpendicular. Thus, for $F\neq \{0\}$ we have
$$\pi _+\left((S_1^o)^{*(-1)}(t(C,F), \Lambda _{t(C,F)})({\rm id}-\pi_+)(S^o_2(F,\Lambda_F))\right )=0.
$$
It then follows from  Theorem~\mref{thm:abf} that, for $\lC\not=\left(\{0\},\{0\}\right)$
\begin{eqnarray*}
\lefteqn{S^o_1\lC=-\pi_+\left(S^o\lC +\sum _{\{0\}\precneqq F\precneqq C}S_1^o(t(C,F),\Lambda_{t(C,F)})S^o(F, \Lambda_F)\right)}\\
&=&-\pi_+\left(S^o\lC +\sum _{\{0\}\precneqq F\precneqq C}S_1^o(t(C,F), \Lambda_{t(C,F)})\pi_+(S^o(F, \Lambda_F))\right)\\
&=& -\pi_+(S^o\lC)-\sum _{\{0\}\precneqq F\precneqq C}S_1^o(t(C,F),\Lambda_{t(C,F)})\pi_+(S^o(F,\Lambda_F)).
\end{eqnarray*}
Hence,
$$-\left(S^o_1\ast(\pi_+\,S^o)\right)\lC=0 \text{ for }\lC\neq (\{0\},\{0\}),
$$
which combined with $(S^o_1\ast (\pi_+\,S^o))(\{0\},\{0\})=1$ yields
$$\mu^o = (S^o_1)^{*(-1)}=\pi_+\,S^o.
$$

The same proof yields the corresponding formula for the closed case.
\end{proof}

We have the following direct consequences.
\begin{coro}
\begin{enumerate}
\item
$\mu ^o$ has the \SSubP and $\mu ^c$ has  the \ValP.
\label{cor:musubd}
\item
$S^c_2$ and thus $S^o_2$ have the \ISubP.
\mlabel {thm:MuSub}
\end{enumerate}
\label{co:subd}
\end{coro}
\begin {proof}
(\ref{cor:musubd}) follows from Theorem~\ref{prop:mu} and the linearity of $\pi_+$. Then (\ref{thm:MuSub}) follows from Corollary~\ref{cor:abfd} and Theorem \mref {thm:0-sub}.
\end{proof}

\subsection{Euler-Maclaurin formula}
With all the preparations accumulated so far, we are ready to derive Euler-Maclaurin formulae for {\ltcone}s. We keep the setup in Section~\mref {ss:subds}. We identify the map $S^c_2$ for smooth {\ltcone}s first. Let $\pi_{\pm}$ be the projection of $\calm _{\QQ}(\C ^\infty)$ to $\calm _{\QQ, \pm}(\C ^\infty)$.

\begin {prop}
\mlabel {pp:SmoothMu} For a smooth \ltcone \ $\lC$, we have
$$\pi _-S^c \lC (\vec \e )=\sum _{F\precneqq C}  \pi _+S^c(t(C,F), \Lambda _{t(C,F)})(\vec \e )I(F, \Lambda _F)(\vec \e ).
$$
\end{prop}

\begin {proof} Let  $\lC$ be a  smooth \ltcone  with primary generators $v_1, \cdots, v_n\in \Lambda_C$, and  let $L_i=L_{v_i}, i=1,\cdots,n,$ be the linear function $\langle v_i, \vec \e \rangle$ where $\vec \e \in V^*$. Furthermore, let
$\displaystyle{\frac 1{1-e^x}=-\frac 1x+h(x)}$
be the decomposition of the function $\displaystyle{\frac 1{1-e^x}}$ into its singular part and holomorphic part.
Then
$$S^c\lC (\vec \e )=\prod_{i=1}^n \frac {1}{1-e^{L_{i}}}
=\prod_{i=1}^n (I(L_{i})+h(L_{i})), \quad I(L_i)=-\frac{1}{L_i}.
$$

For any nonempty subset $J\subset [n]$, let $L_J=\Pi _{i\in J}L_i$ and, as a convention, let $L_{\emptyset }=1$. then the denominators in the expansion are of the form $L_J$ for some nonempty $J\subset [n]$.

Note that any face $(F, \Lambda _F)$ of $(C,\Lambda_C)$ is of the form $\langle v_i\,|\, i\in J\rangle$ for $\emptyset \neq J\subseteq [n]$. Thus we focus on the polar germ of the form $L_Jg$ with $g$ a holomorphic function in linear forms perpendicular to $L_i , i\in J$. Notice in this case the \ltcone is smooth, thus $I(F, \Lambda _F)(\vec \e )=L_J$. By our decomposition procedure, terms of this type  come from the projection of
$$L_J\prod _{i\in [n]-J} \frac {1}{1-e^{L_i}}.
$$
For any vector $v$ and any non-empty subset $K\subset [n]$, let $v^K$ be the projection of $v $ to  the orthogonal subspace to the subspace spanned by $v_{j}, j\in K$, and $L_{v}^J$ be the corresponding linear function. We also use $v^{JK}$ and $L_v^{JK}$ to denote the vector $(v^J)^K$ and the corresponding linear function.  With these notations we have
$$L_i =L_i^K+\sum _{j\in K} a_{ij}L_{j}
$$
for some constants $a_{ij}$, $j\in K$.

The part of the form $L_Jg$ in $\pi _-S^c\lC$ coincides with the corresponding part in
$$\pi _-\left(L_J\prod _{i\in [n]-J} (I(L_i)+h(L_{i}^J))\right).
$$
Now let us determine the contribution from
$$L_JL_{K-J}\prod _{i\in [n]-K} h(L_{i}^J),
$$
with $J\subset K\subsetneqq [n]$.
By above notation, and for $\ell \in [n]-K$, we have
\begin{eqnarray*}
L_{\ell}^J&=&L_{\ell}^{JK}
+\sum _{i\in K} b_{\ell i}L_i\\
&=&L_{\ell}^{JK}+\sum _{i\in J} b_{\ell i}L_i+\sum _{i\in K-J} b_{\ell i}L_{i}^J+\sum _{i\in K_J} \sum _{j\in J} b_{\ell i}a_{ij}L_{j}\\
&=&L_{\ell}^{JK}+\sum _{i\in K-J} b_{\ell i}L_{i}^J+\sum _{i\in J} c_{\ell i}L_{i}.
\end{eqnarray*}
Since  the spaces spanned by $\{v_i\,|\, i\in K\}$ and by $\{v_i\,|\, i\in J\}\cup\{v_j^J\,|\, j\in K-J\}$ coincide, for any vector, its orthogonal projection to the space spanned by $\{v_i\,|\, i\in K\}$ and to the space spanned by by $\{v_i\,|\, i\in J\}\cup\{v_j^J\,|\, j\in K-J\}$ are the same.

Therefore,
$$L_{\ell}^J=L_{\ell}^{JK}+\sum _{i\in K-J} b_{\ell i}L_i+\sum _{i\in J} b_{\ell i}L_i
$$
is the the projections of $L_{\ell}^J$ with respect to the spaces spanned by $\{v_i\,|\, i\in K\}$,
and
$$L_{\ell}^J=L_{\ell}^{JK}+\sum _{i\in K-J} b_{\ell i}L_{i}^J+\sum _{i\in J} c_{\ell i}L_{i}
$$
is the the projections of $L_{\ell}^J$ with respect to the spaces spanned by $\{v_i\,|\, i\in J\}\cup\{v_j^J\,|\, j\in K-J\}$.

By direct calculations, the polar germs of type $L_Jg$ arising from
$L_J\prod\limits _{i\in K-J} I(L_{i})\prod\limits _{i\in [n]-K} (L_{i}^J)^{\alpha _i}$
coincide with those from
$L_J\prod\limits _{i\in K-J} I(L_{i}^J)\prod\limits _{i\in [n]-K} (L_{i}^J)^{\alpha _i},$
for fixed $a_i \in \ZZ _{\ge 0}, i\in K-J$.
Therefore, the polar germs of type $L_Jg$ arising from
$L_J\prod\limits _{i\in K-J} I(L_{i})\prod\limits _{i\in [n]-K} h(L_{i}^J)$
coincide with those from
$L_J\prod\limits _{i\in K-J} I(L_{i}^J)\prod\limits _{i\in [n]-K} h(L_{i}^J).
$
Consequently,  the corresponding terms in
$L_J\prod\limits _{i\in [n]-J} (I(L_i)+h(L_{i}^J))
$
coincide with those from
$L_J\prod \limits_{i\in [n]-J} (I(L_{i}^J+h(L_{i}^J)),
$
which is of the form
$I(L_1)\cdots I(L_m)\pi _+ S^c(t(C,F),\Lambda _{t(C,F)}).
$
This completes the proof.
\end{proof}

\begin {coro} Let $\lC$ be a smooth {\ltcone}, then
$$S^c_2\lC (\vec \e) =I\lC (\vec \e)
$$
and we have the {Euler-Maclaurin formula}:
$$S^c\lC =\sum _{F\preceq C} \pi _+S^c(t(C,F), \Lambda _{t(C,F)})I(F, \Lambda _F).
$$
\mlabel {coro:SmoothEMF}
\end{coro}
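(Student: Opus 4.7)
The strategy is to lift the identity in Proposition~\mref{pp:SmoothMu} to a full convolution factorization $S^c = \mu^c \ast I$ on the subcoalgebra of smooth \ltcones, and then to appeal to the uniqueness clause of the ABF (Theorem~\mref{thm:abf}(\mref{it:uniq})) to conclude that $S^c_2 = I$. The Euler-Maclaurin identity is then read off directly.

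First I would rewrite Proposition~\mref{pp:SmoothMu}. In the expansion $S^c(C,\Lambda_C) = \prod_{i=1}^n(I(L_i)+h(L_i))$ used in its proof, the contribution indexed by $F=\{0\}$ is the fully holomorphic product $\prod_i h(L_i)$, which is annihilated by $\pi_-$; so the proposition amounts to
\begin{equation}
\pi_-\, S^c(C,\Lambda_C) = \sum_{\{0\}\precneqq F \preceq C} \mu^c\bigl(t(C,F),\Lambda_{t(C,F)}\bigr)\,I(F,\Lambda_F).
\mlabel{eq:pimins}
\end{equation}
Adding $\mu^c(C,\Lambda_C) = \pi_+\, S^c(C,\Lambda_C)$ to both sides and recognising it as the $F=\{0\}$ term of the enlarged sum (since $t(C,\{0\})=C$ and $I(\{0\},\{0\}) = 1$), I obtain
\begin{equation}
S^c(C,\Lambda_C) = \sum_{F \preceq C} \mu^c\bigl(t(C,F),\Lambda_{t(C,F)}\bigr)\,I(F,\Lambda_F) = (\mu^c \ast I)(C,\Lambda_C),
\mlabel{eq:emf-smooth}
\end{equation}
which, using $\mu^c = \pi_+\, S^c$ (Proposition~\mref{prop:mu}), is exactly the Euler-Maclaurin formula stated in the corollary.

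To extract the first assertion from~(\mref{eq:emf-smooth}) I would check that the smooth \ltcones form a subcoalgebra of $\QQ\cc$: faces of smooth \ltcones are smooth by Proposition~\mref{pp:smface}, and if $\{v_1,\dots,v_n\}$ is a monoid basis of $\Lambda_C\cap C$ with $F=\cone{v_1,\dots,v_k}$, then $\{\pi_{F^\perp}(v_{k+1}),\dots,\pi_{F^\perp}(v_n)\}$ is at once a cone basis of $t(C,F)$ and a lattice basis of $\Lambda_{t(C,F)}=\pi_{F^\perp}(\Lambda_C)$, so $(t(C,F),\Lambda_{t(C,F)})$ is again smooth. Hence~(\mref{eq:emf-smooth}) is a bona fide convolution factorisation $S^c = \mu^c\ast I$ on this subcoalgebra. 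Since $I(\{0\},\{0\})=1$ and the explicit formula $I(C,\Lambda_C) = (-1)^n/(L_1\cdots L_n)$ of the preceding example places $I$ on nonzero smooth \ltcones inside $\calm_{\Q,-}(\C^\infty)$, the map $I$ satisfies the hypotheses imposed on $\varphi_2$ in Theorem~\mref{thm:abf}. Comparing with the ABF factorisation $S^c = \mu^c \ast S^c_2$ and invoking the uniqueness part~(\mref{it:uniq}) of that theorem forces $S^c_2 = I$ on smooth \ltcones.

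The main obstacle is the combinatorial bookkeeping in promoting~(\mref{eq:pimins}) to~(\mref{eq:emf-smooth})---essentially identifying the holomorphic residue with the missing face of dimension zero; once this is done, closure of smoothness under the coproduct and the pole structure of $I$ are routine, and the conclusion is a direct application of the ABF uniqueness.
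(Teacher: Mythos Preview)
Your argument is correct and is precisely the natural elaboration of the paper's one-line derivation of the corollary from Proposition~\mref{pp:SmoothMu}: you complete the $\pi_-$ identity there to the full convolution $S^c=\mu^c\ast I$ by adding the missing holomorphic $F=\{0\}$ term, and then invoke the uniqueness clause of Theorem~\mref{thm:abf} on the subcoalgebra of smooth \ltcones (whose closure under faces and transverse cones you correctly verify) to conclude $S^c_2=I$. The paper's ``proof'' is the single sentence preceding the corollary, and your proposal fills in exactly the details one would supply when reading it.
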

\begin{proof}
Adding $\pi_+S^c(C,\Lambda_C)$ to both sides of the equation in Proposition~\ref{pp:SmoothMu}, we obtain
$$ S^c(C,\Lambda_C)(\vec \e)=\sum _{F\preceq C}  \pi _+S^c(t(C,F), \Lambda _{t(C,F)})(\vec \e )I(F, \Lambda _F)(\vec \e ).
$$
Then the corollary follows from the uniqueness of the \abf in Corollary~\ref{cor:abfd} since $I(F,\Lambda_F)$ is in $\calm_{\Q,-}(\C^\infty)$.
\end{proof}

We are now ready to give the Euler-Maclaurin formula for {\ltcone}s. Recall that a  cone $C$ in a lattice vector space $(V, \Lambda _V )$ can be viewed as a \ltcone $(C, {\rm lin}(C)\cap \Lambda _V )$. Our approach by means of the \abf applied to  $(C, {\rm lin}(C)\cap \Lambda _V )$ yields  back Berline-Vergne's Euler-Maclaurin formulae for the cone $C$~\mcite{BV1} together with a new piece of information, namely that the interpolation function $\mu^c$ actually boils down to  the holomorphic projection of the exponential sum.

\begin {theorem}\label{thm:S1S2} Let $\lC\in \cc_k$ be a lattice cone and $\vec \e \in V_k^*$. Then
$$S^c_2\lC (\vec \e)=I\lC (\vec \e),$$
$$\hspace{-3cm}(\text{\bf Euler-Maclaurin formula})\qquad S^c\lC =\sum _{F\preceq C} \mu^c(t(C,F), \Lambda _{t(C,F)})I(F, \Lambda _F)
$$
and the interpolation function $\mu^c$ coincides with the holomorphic projection of the discrete sum,
$$\mu ^c =\pi _+S^c.$$
\end{theorem}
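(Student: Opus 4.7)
The projection formula $\mu^c=\pi_+\,S^c$ is already established in Proposition~\ref{prop:mu} on all of $\QQ\dsmc$, so it can be invoked directly. Moreover, granted the identity $S^c_2\lC=I\lC$, the Euler-Maclaurin formula is obtained immediately by unfolding the Algebraic Birkhoff Factorization $S^c=\mu^c\ast S^c_2$ of Theorem~\ref{th:abfd} through the coproduct~(\ref{eq:coproduct}):
\begin{equation*}
S^c\lC=\sum_{F\preceq C}\mu^c(t(C,F),\Lambda_{t(C,F)})\,S^c_2(F,\Lambda_F)=\sum_{F\preceq C}\pi_+S^c(t(C,F),\Lambda_{t(C,F)})\,I(F,\Lambda_F).
\end{equation*}
Hence the entire theorem will follow once $S^c_2=I$ has been proved on every lattice cone.

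The plan for the identity $S^c_2\lC=I\lC$ is to reduce to the smooth case treated in Corollary~\ref{coro:SmoothEMF} by means of a smooth subdivision. First, invoke the earlier proposition guaranteeing the existence of a smooth subdivision $\{(C_i,\Lambda_C)\,|\,i\in[n]\}$ of $\lC$ (this applies for any rational lattice, including the specific choice $\Lambda_C=\Lambda_k\cap\lin(C)$). Second, verify that the exponential integral $I:\QQ\cc\to \calm_\QQ(\C^\infty)$ has the $0$-subdivision property: this is built into its definition since the integrals of the holomorphic integrand $e^{\langle\vec x,\vec\e\rangle}$ over intersections $C_I$ with $|I|\ge 2$ vanish against the relative measure of $\lin(C)$ (those intersections are faces of strictly smaller dimension). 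Third, recall from Theorem~\ref{thm:MuSub} that $S^c_2$ enjoys the $0$-subdivision property as well. With the convention $0^0=1$, the $0$-subdivision property~(\ref{eq:lamsubd}) reduces to pure additivity across the top-dimensional pieces, giving
\begin{equation*}
S^c_2(C,\Lambda_C)=\sum_{i=1}^n S^c_2(C_i,\Lambda_C)\quad\text{and}\quad I(C,\Lambda_C)=\sum_{i=1}^n I(C_i,\Lambda_C).
\end{equation*}

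On each smooth summand $(C_i,\Lambda_C)$ the equality $S^c_2(C_i,\Lambda_C)=I(C_i,\Lambda_C)$ is precisely Corollary~\ref{coro:SmoothEMF}. Summing over $i$ and comparing the two displays yields $S^c_2\lC=I\lC$ as germs in $\calm_\QQ(\C^\infty)$, completing the proof of the first assertion and, via the preceding paragraph, of the Euler-Maclaurin formula.

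The main obstacle I anticipate is verifying that the two $0$-subdivision formulae apply cleanly with the correct lattice data across the subdivision. The subtle point is that in a smooth subdivision of $(C,\Lambda_C)$ each smooth piece carries the same lattice $\Lambda_C$ (not $\Lambda_{C_i}$ computed intrinsically from the generators of $C_i$), so Corollary~\ref{coro:SmoothEMF} has to be invoked for the pair $(C_i,\Lambda_C)$. Since the subdivision refines $C$ into full-dimensional pieces $\lin(C_i)=\lin(C)$ and the smoothness statement is formulated in terms of a monoid basis of $\Lambda_C\cap C_i$, there is nothing to adjust; nonetheless this bookkeeping, together with the ambient lattice hypothesis $\Lambda_C=\Lambda_k\cap\lin(C)$, is what allows the reduction to Berline–Vergne's smooth case to proceed without introducing auxiliary rescalings between $\Lambda_C$ and the lattices of the smooth subcones.
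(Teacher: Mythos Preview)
Your proposal is correct and follows essentially the same route as the paper: both arguments invoke the $0$-subdivision property of $S^c_2$ (Theorem~\ref{thm:MuSub}) and of $I$, match the two maps on smooth lattice cones via Corollary~\ref{coro:SmoothEMF}, and then pass to general lattice cones through a smooth subdivision; the Euler-Maclaurin formula and the projection formula are then read off from the \abf and Proposition~\ref{prop:mu} exactly as you indicate. Your explicit attention to the lattice bookkeeping (each piece $C_i$ carries $\Lambda_C$, and $\lin(C_i)=\lin(C)$) is a welcome clarification that the paper leaves implicit.
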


\begin{proof} Proposition \mref {pp:SmoothMu} shows that $S_2^c$ agrees with $I$ for smooth lattice cones. By Corollary~\ref{co:subd}.(\mref {thm:MuSub}), $S_2^c$ has \ISubP, which is known to hold for $I$ also. Therefore, by taking smooth subdivisions, they agree for all lattice cones, proving the first equation. Then the second and third equations follow from Corollary~\mref{cor:abfd} and Proposition~\mref{prop:mu} respectively.
\end{proof}

In view  of Proposition \mref{pp:oeml}, the \abf for $S^o$ yields an open variant of the Euler-Maclaurin formula by applying the factorization to the linear map $S^o:\QQ \cc\to
\calm_\QQ (\C ^\infty)$.

\begin{coro}
{\bf (Open Euler-Maclaurin formula)}  Let $\lC\in \cc_k$ be a lattice cone and $\vec \e \in V_k^*$. Then
$$S^o_2\lC (\vec \e)=I\lC (\vec \e),$$
$$\hspace{-3cm}(\text{\bf Euler-Maclaurin formula})\qquad S^o\lC =\sum _{F\preceq C} \mu^o(t(C,F), \Lambda _{t(C,F)})I(F, \Lambda _F)
$$
and we have a projection formula for the interpolate function $\mu^o$,
$$\mu ^o =\pi _+S^o.$$
\mlabel{coro:cumfo}
\end{coro}

\noindent
{\bf Acknowledgements}:
This work is supported by the National Natural Science Foundation of China (Grant No. 11071176, 11221101 and 11371178) and the National Science Foundation of US (Grant No. DMS~1001855). The authors thank Kavli Institute for Theoretical Physics China (KITPC) and Morningside Center of Mathematics (MCM) in Beijing where part of the work was carried out.
The second author thanks Sichuan University, Lanzhou University and Capital Normal University for their kind hospitality.

\end{document}